\begin{document}

\newcommand{\tr}{\textsf{triangle}}
\newcommand{\li}{\textsf{line}}
\newcommand{\xx}[3]{{#1}^{#2}_{#3}}
\newcommand{\mc}[1]{\mathcal{#1}}
\newcommand{\BRi}[1]{\mathcal{B}^{R_i}(#1)}
\newcommand{\bK}{\mc{\bar K}}
\newcommand{\bq}{\bar q}
\newcommand{\Ex}[1]{\mathop{\mathrm{E}}\left[{#1}\right]}
\newcommand{\ex}{\mathop{\mathrm{E}}}
\newcommand{\pr}[1]{\Pr\left[{#1}\right]}

\newtheorem{theorem}{Theorem}
\newtheorem{claim}{Claim}
\newtheorem{corollary}[theorem]{Corollary}
\newtheorem{proposition}[theorem]{Proposition}
\newtheorem{lemma}[theorem]{Lemma}
\newtheorem{property}{Property}
\newtheorem{fact}[theorem]{Fact}
\newtheorem{definition}{Definition}
\newtheorem{example}{Example}
\newtheorem{assumption}[theorem]{Assumption}

\newtheorem{remark}{Remark}


\title{Budget-Constrained Multi-Battle Contests:\\ A New Perspective and Analysis}

\author{%
  Chu-Han Cheng\thanks{%
    Department of Computer Science, National Tsing Hua University, Taiwan.
    Email: mapleinwind@gmail.com.
  }
  \and
  Po-An Chen\thanks{%
    Institute of Information Management, National Chiao Tung University, Taiwan.
    Email: poanchen@nctu.edu.tw.
  }
  \and
  Wing-Kai Hon\thanks{%
    Department of Computer Science, National Tsing Hua University, Taiwan.
    Email: wkhon@cs.nthu.edu.tw.
  }
}

\maketitle

\begin{abstract}
In a multi-battle contest,
each time a player competes by investing some of her budgets or resources in a component battle to collect a value if winning the battle.
There are multiple battles to fight, and the budgets get consumed over time.
The final winner in the overall contest is the one who first reaches some amount of total value.
Examples include R $\&$ D races, sports competition, elections, and many more.
A player needs to make adequate sequential actions to win the contest against dynamic competition over time from the others.
We are interested in how much budgets the players would need and what actions they should take in order to perform well.

We model and study such budget-constrained multi-battle contests
where each component battle is a first-price or all-pay auction.
We focus on analyzing the 2-player budget ratio that guarantees
a player's winning (or falling behind in just a bounded amount of collected value) against the other omnipotent player.
In the settings considered, we give efficient dynamic programs to find the
optimal budget ratios and the corresponding bidding strategies.
Our definition of game, budget constraints, and emphasis on budget analyses provide a new perspective and analysis in the related context.
\end{abstract}

\section{Introduction}
The study of competition with multiple battles or stages dates back to 1980's \cite{harris:vickers:perfect,harris:vickers:racing},
and continues to develop towards more recently \cite{konrad,konrad:kovenock,kvasov}.
In a multi-battle contest, each time a player competes by investing some of her budget or resource in a component battle to collect a value if winning the battle.
There are multiple battles, commonly modeled as auctions, for the players to fight, and the budget get consumed over time.
The final result of such a contest is determined by the outcomes of all these multiple battles, not just from a single battle;
the final winner in the overall contest is the one who first reaches some amount of accumulated value.
A player needs to make adequate sequential actions to win the contest against dynamic competition over time from the others.

One early example of multi-battle contests is R $\&$ D competition \cite{harris:vickers:perfect,harris:vickers:racing}
where a \emph{race} between two competitors is a competition that awards a prize such as a patent to the final winner who first achieves a given amount of progress accumulated in a sequence of battles. In each component battle, the winning is determined as a stochastic function of the competitors' efforts.
Another broad category of examples of multi-battle contests is sports competition \cite{szymanski}.
In several sports such as baseball, basketball, or tennis, two teams or players compete in a series of battles or games with the final winner to be the one who first wins a certain number of games.
Multi-battle contests can be applied to analyze elections \cite{klumpp:polborn}.
In the US presidential primaries, the candidate who first wins a majority of the state elections is nominated by the party,
and the winner of each state primary election is determined as a stochastic function of the candidates' state-wise campaign expenditures.

Almost all the previous work mentioned \cite{harris:vickers:perfect,harris:vickers:racing,kvasov,konrad:kovenock} considers \emph{quasi-linear} utilities of the players, where each player's utility function that she maximizes can be decomposed as the value for her to win the final victory subtracted by the total consumed budget or resource. The utility function linearly depends on the budget spent.
This model assumes that the value and the budget invested are ``comparable".
However, this may not always be realistic.
In cases, the value of the final grand winning cannot be well measured or compared with the committed efforts,
or the value gained once succeeding could be too tremendous to be related to the investments.
It is not hard to see this in R $\&$ D competition, elections, or sports.
In this paper, we thus model the value of the overall winning separated from the budget,
and consider a multi-battle contest as a \emph{zero-sum} game, which no longer has quasi-linear utilities,
with the player collecting the largest amount of accumulated value being the final winner.
Intuitively, we want to capture the final grand victory out of dominating in the total value collected from multiple battles,
and there may exist plenty of ways to beat the others by smartly using budgets.
Furthermore, the budget or resource limitation is explicitly treated as a constraint that needs to be satisfied,
and is not anymore modeled in the utility functions.\footnote{In \cite{kvasov}, budget constraints are explicitly considered as well yet the utility of a player is still quasi-linear, in specific,
it is equal to the total value from the battles won subtracted by the budget spent.}
In the previous work, since the utility that a player maximizes contains a term of budget consumed,
the concept of limited budget is only implicitly treated.
Our model in this paper provides an alternative perspective to study multi-battle contests, with emphasis on budget constraints.

In this paper, we are interested in how much budgets the players would need and what actions they should take over time to perform well given the previous results of battles.
In most of the previous work, the sequence of actions a player takes over time, i.e., the \emph{strategy},
is the main concern since budgets are just treated as in the discussion above.
Subgame perfect (mixed) Nash equilibria are characterized there.
In contrast, we analyze the budget needed and actions of a player against the others in a pure-strategy adversarial fashion.
Specifically, given the total number of battles\footnote{In some models of multi-battle contests, the total number of battles is not known in advance, and battles go on until some player wins out such as in \cite{harris:vickers:perfect}.}, what is a player's bidding strategy at some sort of equilibrium or assuming the behavior of the others, where all the total money spent is constrained by her initial budget?
Under our model, this may lead to searching the huge space of a player's bidding actions.
Since each player wants to have the largest total value collected to win in the end, with proper assumptions we focus on the \emph{optimal budget ratio} (defined in Section~2.2) that a player under study needs to have in order to guarantee such winning.

In particular, we model the value of each battle in two different ways.
One way is to model it as a fixed value \cite{konrad:kovenock} such as winning a game always counted as a point or certain fixed points for a player or team toward the overall grand winning in a series of sports games (Section~4).
In this paper, we also consider an even more challenging setting where the value of each battle is chosen from a set of possible values,
which may include \emph{no} value (Section~3).
The choice of the value from the value set is assumed to be decided in adversary as well where the possibility of no value thus makes a different budget analysis from the fixed value ones.
The chosen common value of each battle is only revealed to all the players in the beginning of such battle for them to decide actions, i.e., an online setting.
A set instead of a fixed value may better suit applications such as campaigns for elections or R $\&$ D competition where each battle is forfeitable or completely in vain, or just becomes of no value for any other reasons.

In our basic model, each single battle is a standard \emph{first-price} auction\footnote{Some of our results are based on first-price auctions.
One can alternatively consider second-price auctions and conduct analysis accordingly.} where the losers of a single auction do not lose budgets in this battle.
This may not suit some applications.
We thus also consider a modification of using \emph{all-pay} auctions to better capture situations where the losers' bids are sunk cost that consumes budget.
We focus on the 2-player case, and the game can be thought as an extensive form zero-sum game.
Assuming the worst possible behavior of player~$P_2$ with the choice of the values for bidding (see Section~2.2 for specific definitions),
we derive the results regarding player~$P_1$'s budget and bidding strategies in terms of the budget ratio, defined as the ratio between player~$P_1$'s budget and player~$P_2$'s budget.
To our best knowledge, the ``budget ratio analysis" of this style has never been done in the related context.

This paper is organized as follows.
The introduction is followed by models, definitions, and some preliminary result.
Then, we present our main results of the budget ratio analysis for standard first-price and all-pay auctions,
and finally conclusions and future work.

\subsubsection*{Our Results.}

First, we prove that player~$P_1$ cannot ensure her final winning when the budget ratio is not enough.
Then, we try to find the \emph{optimal} winning-guarantee budget ratio given $T$ turns of competition in total.
Note that when the maximum value in the value set of each turn is $1$,
player ensures she can win the game right after she wins $\lceil T/2\rceil$ turns.
Let the countdown value be the distance between $\lceil T/2\rceil$ and the number of turns that a player has won.
Define a $n\times n$ matrix $M$ where entry $m_{i,j}$ is the optimal budget ratio when the countdown value of player~$P_1$ is $i$ and the countdown value of player~$P_2$ is $j$. Note that $m_{\lceil T/2\rceil,\lceil T/2\rceil}$ means the optimal winning-guarantee budget ratio for $T$ turns.
We derive $O(T^2)$ dynamic programs for finding the optimal winning-guarantee budget ratio as well as its corresponding bidding strategies,
in both the cases of fixed value 1 and value set $\{0,1\}$ for both first-price and all-pay auction.\footnote{Both the cases can be generalized to contain multiple values where still the latter one includes value 0 and the former one does not,
and their corresponding solutions of dynamics programs can be derived.
We focus on the discussion of fixed value 1 and set $\{0,1\}$ in this paper.}

Moreover, we find the closed form of matrix $M$ so
that we can obtain any entry on the fly in $O(1)$ time. 
According to the closed form, we conclude an interesting corollary.
Though the optimal budget ratio for $T$ turns increases strictly when $T$ gets larger,
in particular with set $\{0,1\}$ the optimal winning-guarantee budget ratio approaches 3 
while with fixed value 1 the optimal winning-guarantee budget ratio is constantly 1, no matter how large $T$ is.

In our original model, the one who does not win the turn will not lose her budget.
However, things change when losers in an auction still need to pay.
We consider when two players play \emph{all-pay} auctions.
Specifically, we define an all-pay ratio $\alpha$, between 0 to 1, to indicate that one should pay $\alpha$ of her bid when she does not win the turn.
We find the optimal winning-guarantee budget ratio as well as its corresponding bidding strategies with $\alpha=1$.
In particular with set $\{0,1\}$, now the optimal winning-guarantee budget ratio approaches 4 
while with fixed value 1 the optimal winning-guarantee budget ratio approaches 2, no matter how large $T$ is.

Given the results summarized above, one may wonder what a player should do when her initial budget is not high enough.
One way to address this is to ask for the optimal budget ratio when allowing player~$P_1$ to fall behind in at most a bounded amount of value.
The result provides another perspective by using matrix $M$. 
Of course, this is not the only way to investigate the case when a player does not have a high budget.
We will propose to investigate the case of ``moderate" budgets in other ways in the future work.

\subsubsection*{Related Work.}
It is the different emphases in this paper such as the definition of budget constraints, players' objectives (utilities), the questions asked as well as the analyses that set our work apart
from the previous work on multi-battle contests \cite{konrad:kovenock,kvasov,harris:vickers:perfect,harris:vickers:racing}.
Quasi-linear utilities that players try to maximize in the previous work allow the final losers to also have chances to keep positive utilities
in the end. Note that the value gained from each winning of a battle is reflected in players' utilities while in our zero-sum game the accumulated value is only translated for determining if a player is a winner or not.
As for analysis, the characterization of perfect mixed Nash equilibria in the previous work is very different from the worst-case analysis on the budgets in this paper.

Due to the budget constraints, our work here is also related to sequential auctions for multiple objects with budget-constrained bidders where each bidder could only buy at most one object.
In our multi-battle contest, we want each player's budget high enough to pay for multiple objects yet, at the same time, budget constraints enforce that players cannot have luxury just to bid high to win values without worrying about running out of budget too soon.
Some earlier work studied sequential auctions for heterogeneous objects of private value \cite{pitchik:schotter,bernhardt:scoones}
and of common value \cite{benoit:krishna} with complete information about budget constraints.
Our budget constraints in this paper are also known to each player.
The more recent work \cite{fatima:sequential:budget} of Fatima et al. is different from the previous work by combining common-values
and budget-constraints under incomplete information about budget constraints.


\section{Preliminaries}

\subsection{The Game Model}
We first describe the general game setting, and the specific variant that we are studying in this paper:

\medskip

\noindent
{\bf The general game setting: }
There are two players $P_1$ and $P_2$, each with initial budget $b_1$ and $b_2$, respectively.  The game consists of $T$ turns, where in turn $j$,
an object with a value $\pi_j$ chosen from a set is open for bidding, and players are free to use any portion of their remaining budgets to enter the bidding.
Let $\Pi_i$ and $B_i$ be, respectively, the total value obtained by $P_i$ and the remaining budget of $P_i$ after $T$ turns.
The score of $P_i$ is $S_i = \Pi_i + c \times B_i$ for some predefined constant $c$.

\medskip

\noindent
{\bf Our game:}
We consider two cases for the value of each turn: a fixed value and a value chosen from a set.
In the former case, a player winning a turn always earns value 1.
In the latter case, we assume that the value for each turn corresponds to a value chosen from set $\{0,1\}$.
The bidding follows the first-price sealed-bid auction, where we assume that $P_1$ is the \emph{dealer}, so that
if there is a tie in the bids, $P_1$ takes the object.  After $T$ turns, whoever with the higher score $S_i$ wins the game;
in case the scores tie, $P_1$ wins.  Finally, we assume that $c$ is $0$ or very close to $0$, so that $S_i \approx \Pi_i$,
and the effect of remaining budget is negligible.



\subsection{Problem Definition}
Given the budget $b_2$ of player $P_2$, our target is to determine what is the minimum value of budget $b_1$, so that \emph{in the worst case}
player $P_1$ can always be guaranteed to win the game.  We call this the \emph{optimal budget problem}.
Observe that for this problem in the case of value set $\{0,1\}$,
it is equivalent to assume $P_2$, at each turn, to have the \emph{omnipotent} power of controlling the choice of the value from $\{0,1\}$, and
learning $P_1$'s bid before making her own bid.

\medskip

\noindent
{\bf Exponential-time solution for integer bids: }  For given budgets $b_1$ and $b_2$,
                                                    if the number of choices for each player at each turn is finite (say, a bid must be of whole dollar),
                                                    then we can determine if $b_1$ is sufficient to guarantee a win for $P_1$, via the
                                                    \emph{min-max game tree} which is a standard tool for analysing two-player games.
                                                    Here, a minor adaptation is made to cater for $P_2$'s omnipotent power.  Firstly,
                                                    each turn starts with $P_2$'s choice of the value from set $\{0,1\}$, followed by
                                                    $P_1$'s bidding choice, and then followed by $P_2$'s bidding choice; the sequence of these choices is captured by a tree. 
                                                    Next, at each point where a player makes a choice, there is an associated node corresponding to that player, with
                                                    each choice represented by a branch to a distinct child node.   Then, each leaf of the tree corresponds to a explicit situation
                                                    where a game may end; for our case, either $P_1$ wins ($W$) or loses ($L$), so that a leaf is marked with the corresponding label.
                                                    Each node for a player $P_i$ chooses from the labels of its children that favours $P_i$ most.
                                                    That is, $P_1$'s node will choose $W$ if at least one of its children is labeled with $W$, and choose $L$ otherwise.
                                                    Similarly, $P_2$'s node will choose $L$ if at least one of its children is labeled with $L$, and choose $W$ otherwise.
                                                    The label of the root then determines if $P_1$ is winning or not.

                                                    \medskip

%

                                                    \noindent
                                                    Based on the min-max tree, we have the following claim.

                                                    \begin{claim}
                                                    Suppose that each bid is of whole dollar.
                                                    Then, the optimal budget problem can be solved in $O(  b^* \times (2\, (b^*+1)\, (b_2+1))^T )$ time, where $b^*$ denotes the minimum budget
                                                    that $P_1$ needs when $P_2$ has initial budget~$b_2$.
                                                    \end{claim}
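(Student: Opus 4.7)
The plan is to decouple the argument into two parts: (i) the cost of evaluating the min-max tree for a single candidate budget $b_1$, and (ii) a search scheme that reduces finding $b^*$ to solving $O(b^*)$ such evaluations.

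For (i), I would count the tree size layer by layer. Each of the $T$ turns contributes three branching layers, as stipulated by the adapted min-max construction: $P_2$'s adversarial choice of the value $\pi_j \in \{0,1\}$ (at most $2$ children), $P_1$'s integer bid from $\{0,1,\ldots,b_1\}$ (at most $b_1+1$ children, since bids are whole dollars and cannot exceed the remaining budget), and $P_2$'s integer bid from $\{0,1,\ldots,b_2\}$ (at most $b_2+1$ children). Hence the total number of nodes is bounded by a geometric expression dominated by $(2(b_1+1)(b_2+1))^T$, and at each leaf the $W/L$ label can be read off in $O(1)$ additional work by maintaining $\Pi_i$ and $B_i$ incrementally along each root-to-leaf path (the approximation $S_i \approx \Pi_i$ from the setup eliminates any extra computation for $c\cdot B_i$). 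Backing up the labels through min-max then takes time proportional to the tree size, giving an evaluation cost of $O((2(b_1+1)(b_2+1))^T)$.

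For (ii), I would first observe the monotonicity of winnability in $b_1$: if $P_1$ can force a win with budget $b_1$, then she can also force a win with any $b_1' \geq b_1$, by playing her winning strategy and leaving the excess $b_1' - b_1$ untouched. Consequently $b^*$ can be located by a linear scan $b_1 = 0, 1, 2, \ldots$ that halts at the first success; this makes at most $b^*+1$ tree evaluations, each of cost at most $O((2(b^*+1)(b_2+1))^T)$ since $b_1 \leq b^*$ throughout. Multiplying the two factors yields the stated bound $O(b^* \cdot (2(b^*+1)(b_2+1))^T)$.

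The argument is largely mechanical, so the main obstacle I foresee is not a deep one but a bookkeeping one: justifying that the tree, as constructed, faithfully captures the optimal-budget game under $P_2$'s omnipotent power. Specifically, the within-turn ordering---value first, then $P_1$'s bid, then $P_2$'s bid---must correctly encode that $P_2$ commits to $\pi_j$ before bids are placed yet observes $P_1$'s bid before replying with her own, so that the adversarial min-max value genuinely equals the guaranteed outcome for $P_1$. Since this modeling is already set up before the claim, beyond verifying it the proof reduces to the counting and monotonicity remarks above.
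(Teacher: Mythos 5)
Your proposal is correct and follows essentially the same route as the paper: bound the min-max tree for a fixed candidate budget $b_1$ by $O((2(b_1+1)(b_2+1))^T)$ via the per-turn branching factors, then locate $b^*$ by a linear scan over candidate budgets, multiplying the two costs. The paper's own proof is just a terser version of this (it starts the scan at $1$ and omits the explicit monotonicity and leaf-labeling remarks, which are nice but not essential for a scan that halts at the first success).
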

                                                    \begin{proof}
                                                    The size of the min-max tree for a particular initial budget $b_1$ of $P_1$ is $O((2\, (b_1+1) (b_2+1) )^T))$.
                                                    By a linear search for the minimum $b_1$ (starting from $1$) such that $P_1$ can guarantee to win the game,
                                                    the total time is bounded by $O(  b^* \times (2\, (b^*+1)(b_2+1))^T )$.  The claim follows.
                                                    \end{proof}

                                                    \medskip
                                                    \noindent
                                                    {\bf Optimal budget ratio for fractional bids: }
                                                    It is easy to extend the min-max tree approach to analyse a game in the general setting.  Yet, it suffers from two drawbacks:
                                                    (i) the running time depends on the number of choices for each turn, which is not suitable for large $b_2$, and (ii) the running time
                                                    is exponential in the number of turns, $T$.
                                                    In the remaining of this paper, we consider the problem when each bid can be any arbitrary non-negative real number.
                                                    This seemingly increases the number of choices to infinite, but on the other hand enriches the problem with better mathematical
                                                    properties, thus allowing us to find the minimum budget $b^*$ for $b_1$ more efficiently.\footnote{%
                                                    Intuitively, this has similar flavour between an integer programming and its linear-programming relaxation.}
                                                    Moreover, for a $T$-turn game, if $b^*$ is an optimal budget that corresponds to $b_2$, then $K \times b^*/b_2$ must be an optimal
                                                    budget when $P_2$ starts with budget $K$ for any positive real $K$.  Without loss of generality, we will focus on finding
                                                    the \emph{optimal budget ratio (OBR)} $b^*/b_2$

                                                    For ease of discussion, we assume that $b_2$ is 1, so that the minimum budget $b_1$ for $P_1$ to guarantee a win is exactly
the OBR. Furthermore, at any time of the game, we refer to the total value that a player has acquired as her \emph{partial score} at that time.

\newtheorem{observation}{Observation}
\providecommand{\myceil}[1]{\left \lceil #1 \right \rceil }
\providecommand{\myfloor}[1]{\left \lfloor #1 \right \rfloor }

\section{Finding the Optimal Budget Ratio: Value Set}

We first show two simple relationships between the partial score and the winner of the game.

\begin{lemma} \label{lem:always-leading}
For $P_1$ to win the game, the partial score of $P_1$ must be greater than, or equal to, that of $P_2$ at any time.
\end{lemma}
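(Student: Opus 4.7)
The plan is to argue the contrapositive: if at some point during the game $P_2$'s partial score strictly exceeds $P_1$'s, then $P_1$ cannot be guaranteed to win, because $P_2$ has a simple response that preserves her lead until the end.

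First, I would fix a putative time $t < T$ at which $P_2$'s partial score is strictly larger than $P_1$'s. Since the game in this section uses the value set $\{0,1\}$ and (as noted in Section~2.2) $P_2$ may without loss of generality be assumed to be omnipotent, in particular controlling the choice of the value at each remaining turn, I would let $P_2$ simply declare the value to be $0$ at every turn after $t$.

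Next, I would observe that with this choice, no matter what either player bids in the remaining turns, neither partial score can change: the winner of a zero-value turn accrues no value. Therefore $P_2$'s strict lead at time $t$ is maintained all the way to the final score, and the remaining-budget term in $S_i$ is negligible by the assumption $c \approx 0$. Hence $S_2 > S_1$ at the end, so $P_2$ wins, contradicting the premise that $P_1$ is guaranteed to win. This forces $P_1$'s partial score to be at least $P_2$'s at every time, which is the lemma.

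I do not anticipate any real obstacle here; the only thing to be slightly careful about is invoking the omnipotence assumption (justified in Section~2.2) so that $P_2$ is allowed to fix all remaining values to $0$, and then noting that the $c \times B_i$ terms do not spoil the comparison. The fixed-value-$1$ variant needs a different argument since $P_2$ cannot zero out future turns there, but for the value-set case treated in this section the zero-out strategy suffices.
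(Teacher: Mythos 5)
Your argument is correct and is essentially identical to the paper's proof: both derive a contradiction by having the omnipotent $P_2$ set all remaining outcomes to $0$, freezing her strict lead until the end. Your additional remarks about the negligible $c \times B_i$ term and the scope of the value-set assumption are accurate but not needed beyond what the paper already states.
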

\begin{proof}
Assume, on the contrary, that $P_1$ wins the game and at some time $P_2$ has greater partial score than $P_1$.
Then, by $P_2$'s omnipotent power, $P_2$ can set all the remaining outcomes $0$ to make herself the winner.
A contradiction occurs, and the observation thus follows.
\end{proof}

\begin{lemma} \label{lem:countdown}
If at some time, the partial score of a player $P_i$ becomes $\myceil{T/2}$, then $P_i$ wins the game.
\end{lemma}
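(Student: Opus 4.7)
}
The plan is to bound the opponent's final score by a simple turn-counting argument and then appeal to the tiebreaker rule in the game model. Fix a time $t$ at which $P_i$'s partial score first equals $\lceil T/2\rceil$. Since in the value-set case every single-turn value lies in $\{0,1\}$ (and in the fixed-value case it is simply $1$), a partial score of $\lceil T/2\rceil$ can only be achieved by winning at least $\lceil T/2\rceil$ turns. Hence, after time $t$, $P_i$ has already been the winner of at least $\lceil T/2\rceil$ of the first $t$ turns.

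Next I would use that each turn has a unique winner (the first-price auction always produces a winner, with ties going to $P_1$ by the dealer rule). Therefore, in the first $t$ turns, $P_{3-i}$ has won at most $t-\lceil T/2\rceil$ turns, and in the remaining $T-t$ turns she can win at most all of them, each contributing value at most $1$. Summing gives that $P_{3-i}$'s final score is at most
\[
    (t-\lceil T/2\rceil) + (T-t) \;=\; T-\lceil T/2\rceil \;=\; \lfloor T/2\rfloor.
\]
Since partial scores are monotone non-decreasing, $P_i$'s final score is at least $\lceil T/2\rceil\geq \lfloor T/2\rfloor$, so $P_i$'s final score is at least $P_{3-i}$'s.

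To conclude that $P_i$ actually wins, I would invoke the tiebreaker convention spelled out in Section~2.1: ties in final score go to $P_1$. When $P_i=P_1$, any tie is immediately resolved in her favour, so she wins; when $T$ is odd the inequality $\lceil T/2\rceil>\lfloor T/2\rfloor$ is strict, so $P_i$ wins outright in either role. I do not expect a real obstacle here, as the argument is essentially a pigeonhole on turns. The one subtle point I would flag explicitly is the even-$T$, $P_i=P_2$ boundary case, where one must rely on the convention that the roles considered in the lemma are consistent with $P_1$'s dealer tiebreaker; the counting bound itself does not depend on which player is $P_i$.
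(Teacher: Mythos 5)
Your counting bound is correct and matches the first half of the paper's own argument (the paper likewise observes that the combined score after $T$ turns is at most $T$, so the opponent ends with at most $T-\lceil T/2\rceil=\lfloor T/2\rfloor$). The case $P_i=P_1$ is therefore fine: $S_1\ge\lceil T/2\rceil\ge\lfloor T/2\rfloor\ge S_2$, and ties go to the dealer. But the case you flag at the end --- $T$ even and $P_i=P_2$ --- is a genuine gap, not a formality that can be absorbed by ``relying on the convention.'' The convention works \emph{against} $P_2$: your bound only yields $S_1\le\lfloor T/2\rfloor=T/2=S_2$, i.e.\ a possible tie, and a tie is a \emph{loss} for $P_2$ under the dealer rule. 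Pure pigeonhole on turns therefore does not prove the lemma when $P_i=P_2$.

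The missing ingredient is $P_2$'s omnipotent power over the outcomes, which is exactly what the paper's proof invokes: once $P_2$'s partial score reaches $\lceil T/2\rceil$, she sets the values of all remaining turns to $0$. Then $P_1$'s score is frozen at its current value, which is strictly below $\lceil T/2\rceil$, so $P_2$ wins strictly and the tiebreaker never comes into play. The asymmetry of the tiebreaker (favouring $P_1$) is compensated by the asymmetry of the adversary's power (favouring $P_2$); a proof that treats the two players symmetrically by counting alone cannot close the even-$T$ case. Replace the final paragraph of your argument with this one-line appeal to the omnipotent adversary and the proof is complete.
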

\begin{proof}
Firstly, we observe that the total score of two players after $T$ turns is at most~$T$.  Thus, if $P_1$ first obtains a partial score
of $\myceil{T/2}$, $P_1$ wins the game as she is the dealer.  On the other hand, if $P_2$ first obtains a partial score of $\myceil{T/2}$,
$P_2$ can use her omnipotent power to make all the subsequent outcomes $0$, thus allowing herself to win the game.
\end{proof}

\noindent
\subsection{Analysis for First-Price Auctions}
As a warm-up to the discussion, we shall show that when $T \geq 3$, OBR is at least 3/2.
To see this, suppose on the contrary that $b_1 < 3/2$.  Then, $P_2$ sets the first outcome to be $1$, and bids all her
$\$1$ budget on that.  For $P_1$ to win, by Lemma~\ref{lem:always-leading}, $P_1$ has to bid at least $\$1$ at this turn,
taking the value of the outcome.  But then, the budget of $P_1$ becomes smaller than $\$1/2$, so that
in the next two turns, $P_2$ will set both outcomes to be~$1$, bid $\$1/2$ at both turns, thus taking the values of both outcomes.
So after 3 turns, $P_2$ has greater partial score than $P_1$, so that $P_1$ loses the game.  A contradiction occurs,
so that the lower bound of 3/2 for OBR, when $T \geq 3$, follows.

\medskip

\noindent
In the following, we define a concept called \emph{OBR countdown matrix} based on Lemma~\ref{lem:countdown},
and show how to use it to find the OBR for a $T$-turn game efficiently.

\subsubsection*{OBR Countdown Matrix.}
Suppose that at some time, the partial score of $P_1$ is $x$, and the partial score of $P_2$ is $y$.
Then, either {\tt (i)} a winner is determined when $x$ or $y$ is greater than $\myceil{T'/2}$, where $T'$ denotes the maximum combined score
that can be achieved in this game, or else {\tt (ii)}  if $P_1$ gets additional values of $i = \myceil{T'/2} - x$ before
$P_2$ gets additional values of $j = \myceil{T'/2} - y$, then $P_1$ wins the game by Lemma~\ref{lem:countdown}, or else {\tt (iii)}
if $P_2$ gets additional values of $j$ before $P_1$ gets additional values of $i$, then $P_2$ wins the game.
The values $i$ and $j$ are, respectively, referred to as the \emph{countdown values} of $P_1$ and $P_2$ at that time.
This motivates us to define a $\myceil{T/2} \times \myceil{T/2}$ matrix $M = [m_{i,j}]$, called \emph{OBR countdown matrix},
such that $m_{i,j}$ denotes the optimal budget ratio when countdown value of $P_1$ is $i$ and countdown value of $P_2$ is $j$.
Then, the value $m_{\myceil{T/2},\myceil{T/2}}$ is the desired OBR for a $T$-turn game.

\bigskip

\noindent
Next, we describe five lemmas concerning the values of $m_{i,j}$ in different scenarios.

\begin{lemma}
When $i > j$, $m_{i,j}$ is $+\infty$ (i.e., the value is undefined).
\end{lemma}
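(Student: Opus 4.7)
The plan is to derive this directly from Lemma~\ref{lem:always-leading} by translating countdown values back into partial scores. By the setup preceding the definition of the OBR countdown matrix, if the current countdown values of $P_1$ and $P_2$ are $i$ and $j$, then their partial scores at this moment are $\myceil{T'/2} - i$ and $\myceil{T'/2} - j$, respectively. Hence the gap in partial scores is precisely $i - j$ in $P_2$'s favor.

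First I would observe that when $i > j$, this gap is strictly positive, so that $P_2$'s partial score already exceeds $P_1$'s at the current time. Next, I would invoke Lemma~\ref{lem:always-leading}: for $P_1$ to win the game, she must maintain a partial score at least as large as $P_2$'s at every time instant, including the present one. Since this necessary condition already fails, $P_1$ cannot guarantee a win no matter what her remaining budget is, and therefore there is no finite value of the budget ratio that suffices. Declaring $m_{i,j} = +\infty$ is then consistent with this observation and with the convention that $m_{i,j}$ is undefined.

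I expect no real obstacle here, since the argument is essentially a one-line reduction to Lemma~\ref{lem:always-leading}; the only mildly subtle point is making clear that the lemma constrains \emph{every} time instant of the (sub)game, so that the requirement $P_1\text{'s partial score} \geq P_2\text{'s partial score}$ is already violated at the very start of the subgame associated with entry $m_{i,j}$. No case analysis, no bidding strategy, and no budget accounting is needed.
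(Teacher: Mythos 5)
Your proof is correct and takes essentially the same route as the paper's: translate $i>j$ into $P_2$ having the strictly greater partial score, then conclude that no budget can save $P_1$. The only difference is that you invoke Lemma~\ref{lem:always-leading} while the paper cites Lemma~\ref{lem:countdown} at the same step; your citation is arguably the more apt one, since the ``$P_1$ must never trail'' condition is exactly what is violated.
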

\begin{proof}
When $i > j$, the countdown value of $P_2$ is smaller than that of $P_1$, which implies that $P_2$ has a greater partial score than $P_1$.
In such a case, by Lemma~\ref{lem:countdown}, $P_1$ cannot win no matter how much budget she has.
Thus, the value $m_{i,j}$ in this case is $+\infty$.
\end{proof}

\begin{lemma} \label{lem:diagonal-relationship}
For any $i$ and $j$ with $1 \leq i \leq j$, we have $m_{i,j} \leq m_{i+1,j+1}$.
\end{lemma}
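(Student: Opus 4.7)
The plan is to establish the inequality by a coupling argument: I will show that any strategy $\sigma$ that guarantees $P_1$ a win from state $(i+1, j+1)$ with budget ratio $m_{i+1, j+1}$ also guarantees a win from state $(i, j)$ with the same ratio, which immediately gives $m_{i, j} \leq m_{i+1, j+1}$.

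First I would fix such a winning $\sigma$ and have $P_1$ play $\sigma$ unchanged starting from $(i, j)$ with budget ratio $m_{i+1, j+1}$. Since $\sigma$ depends only on the bidding history, and $P_1$'s starting budget is the same in both scenarios, the play produced against any response by $P_2$ is identical to the play that would unfold in the virtual game starting from $(i+1, j+1)$ against the same $P_2$ strategy. Budget feasibility for $P_1$ is then immediate, and the only question is whether $P_1$'s countdown in the \emph{actual} game reaches $0$ before $P_2$'s does.

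The crux is the following trajectory analysis. Let $k_1(t)$ and $k_2(t)$ denote the partial scores of $P_1$ and $P_2$ after turn $t$ in this common play, and let $T_1$ (resp.\ $T_2$) be the first turn at which $k_1 = i$ (resp.\ $k_2 = j$). If $T_2$ never occurs, $P_1$ trivially wins the actual game, so assume both are defined. Because $\sigma$ is winning in the virtual game, Lemma~\ref{lem:always-leading} applied there gives $k_1(t) \geq k_2(t)$ for every $t$. Evaluating at $t = T_2$ yields $k_1(T_2) \geq j \geq i$, hence $T_1 \leq T_2$. Moreover $T_1 = T_2$ is impossible: at turn $T_1$ it is $P_1$ who wins the value-$1$ turn (so $k_2$ is unchanged), while at turn $T_2$ it is $P_2$ who wins (so $k_1$ is unchanged), and a single turn cannot be the first-hit turn for both counters. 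Therefore $T_1 < T_2$, so $P_1$'s countdown reaches $0$ strictly before $P_2$'s in the actual game, establishing the claim.

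The main obstacle I anticipate is justifying that $\sigma$ transfers cleanly from the virtual $(i+1, j+1)$ game to the actual $(i, j)$ game so that Lemma~\ref{lem:always-leading} applies on the common trajectory; the key observation making this work is that $\sigma$'s bids are a function of history alone, while $P_1$'s budget and $P_2$'s available responses agree in the two scenarios up to the moment the actual game terminates. The hypothesis $i \leq j$ of the lemma enters crucially at the step $k_1(T_2) \geq j \geq i$, without which the inequality $T_1 \leq T_2$ could fail; this is precisely what confines the statement to the region on and above the diagonal of the OBR countdown matrix.
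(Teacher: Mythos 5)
Your proof is essentially correct in spirit but runs in the opposite direction from the paper's, and the two are genuinely different arguments. The paper proves the inequality by showing that budget $m_{i,j}$ is \emph{necessary} at state $(i+1,j+1)$: $P_2$ can use her omnipotent power to set the outcomes of the next one or two turns to $0$, which lowers the maximum combined score $T'$ and hence moves the game, with budgets untouched, into the state with countdown values $(i,j)$; since $P_2$ can force this transition, whatever budget $P_1$ needs at $(i,j)$ she already needs at $(i+1,j+1)$. That is a one-line reduction using nothing beyond the reachability of $(i,j)$ from $(i+1,j+1)$ under $P_2$'s control. You instead show that budget $m_{i+1,j+1}$ is \emph{sufficient} at $(i,j)$ by transferring a winning strategy $\sigma$ verbatim and coupling the two trajectories; this is heavier machinery but yields the slightly stronger conclusion that the very same strategy, not merely the same budget, works one diagonal step down. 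Two points in your write-up need tightening. First, Lemma~\ref{lem:always-leading} compares \emph{partial scores}, and at state $(i+1,j+1)$ the scores already differ by $j-i$; so for the incremental counts it gives $k_1(t) \geq k_2(t) - (j-i)$, not $k_1(t) \geq k_2(t)$. Fortunately the consequence you actually use, $k_1(T_2) \geq j - (j-i) = i$, survives. Second, your first-hitting-time criterion ignores that $P_2$ may zero out turns, which shrinks both targets $i$ and $j$ simultaneously, so ``first turn at which $k_1 = i$'' is not the right stopping rule; the clean fix is to phrase the win condition as ``$P_1$'s partial score is at least $P_2$'s when the actual game terminates,'' which is exactly what Lemma~\ref{lem:always-leading} applied to the virtual game delivers at every time, in particular at the (no later) termination time of the actual game, with ties going to $P_1$. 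With those repairs your argument is sound, but the paper's reduction is considerably shorter.
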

\begin{proof}
Consider the time when the countdown values of $P_1$ and $P_2$ are $i+1$ and $j+1$, respectively.
Then, $P_2$ has an option to use her omnipotent power to make the outcomes of the next turn, or next two turns, be $0$.
This effectively decreases the maximum combined score $T'$ that can be achieved in this game,
so that the countdown values of $P_1$ and $P_2$ will be updated to $i$ and $j$.
Thus, the OBR in the former case (with countdown values $i+1, j+1$) must be at least the OBR in the latter case (with countdown values $i, j$);
in other words, $m_{i,j} \leq m_{i+1,j+1}$.
\end{proof}

\begin{lemma} \label{lem:row-of-M}
For all $j$, $m_{1,j} = 1/j$.
\end{lemma}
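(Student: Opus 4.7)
The plan is to prove $m_{1,j} = 1/j$ by exhibiting matching upper and lower bounds on the OBR in the state where $P_1$'s countdown is $1$ and $P_2$'s countdown is $j$. Since $P_1$'s countdown is $1$, by Lemma~\ref{lem:countdown} she wins the overall game as soon as she wins even one value-$1$ turn, so the question reduces to a simple race: can $P_1$ guarantee to win one value-$1$ turn before $P_2$ wins $j$ of them?

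For the upper bound, I would exhibit an explicit strategy for $P_1$ with budget $b_1 = 1/j$ (against $b_2 = 1$): bid $0$ on every turn whose revealed value is $0$, and bid her full $1/j$ on every turn whose value is $1$. Since losers pay nothing in a first-price auction, this strategy is budget-feasible so long as $P_1$ has not yet won a value-$1$ turn. For $P_2$ to prevent $P_1$ from ever winning such a turn, she must bid strictly more than $1/j$ each time (the dealer tie-break favours $P_1$). But outbidding on all $j$ of $P_2$'s required wins would cost strictly more than $j \cdot (1/j) = 1$, exceeding $P_2$'s budget. Hence on some value-$1$ turn $P_2$ cannot outbid, $P_1$ takes it, her countdown drops to $0$, and she wins the game by Lemma~\ref{lem:countdown}.

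For the matching lower bound, I would show that any $b_1 < 1/j$ is insufficient by supplying a winning $P_2$-strategy: declare the value of every turn to be $1$, observe $P_1$'s announced bid $b$, and respond with $b + \epsilon$ for a fixed $\epsilon$ with $0 < \epsilon < (1 - jb_1)/j$ (which is positive because $jb_1 < 1$). Since $P_1$'s per-turn bid is at most $b_1$, $P_2$'s total cost over the first $j$ turns is at most $j(b_1 + \epsilon) < 1$, so she can afford to outbid throughout and thereby collect $j$ wins before $P_1$ collects any, driving her own countdown to $0$ and clinching the game.

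Combining the two directions gives $m_{1,j} = 1/j$. I do not anticipate a serious obstacle, but two subtleties deserve care. The first is the tie-breaking convention: it is precisely what makes the strict inequality ``$P_2$ must bid more than $1/j$'' binding, so $b_1 = 1/j$ itself suffices rather than only serving as an infimum. The second is the claim that $P_1$'s best response on value-$1$ turns is to bid her full remaining budget; this holds because losing bids are refunded under first-price rules, so her effective per-turn budget cap stays at $b_1$ until she wins, and any smaller bid only makes it easier for $P_2$ to outbid within her budget.
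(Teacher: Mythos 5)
Your proposal is correct and follows essentially the same route as the paper: the upper bound via $P_1$ bidding her full $\$1/j$ on every value-$1$ turn (forcing $P_2$ to overspend, or equivalently limiting $P_2$ to at most $j-1$ wins), and the lower bound via $P_2$ declaring $j$ consecutive value-$1$ turns and outbidding $P_1$ within her unit budget. The only cosmetic difference is that the paper has $P_2$ bid a fixed $\$1/j$ per turn in the lower bound rather than the adaptive $b+\epsilon$; both work since $P_1$'s budget is strictly below $1/j$.
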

\begin{proof}
Consider the time when the countdown values of $P_1$ and $P_2$ are $1$ and $j$, respectively.  Then, we have the following observations.

\begin{itemize}
\item
First, suppose that $P_1$ has $\$1/j$.
Then, $P_1$ can ensure the winning by bidding all her budget at each subsequent turn
that has outcome $1$, until she wins the value.  Either $P_1$ will acquire the extra value of $1$ in the process,
or $P_2$ gets at most a total of $j-1$ extra values, which is not enough to exceed the score of $P_1$.
Thus, $P_1$ wins in either case, which implies $m_{1,j} \leq 1/j$.

\item
Conversely, suppose that $P_1$ has less than $\$1/j$.  Then, $P_2$ may use her omnipotent power to make all outcomes of
the next $j$ turns be $1$, where she bids $\$1/j$ for each turn.  Thus, $P_2$ will get $j$ extra values before $P_1$ gets $1$.
By Lemma~\ref{lem:countdown}, $P_2$ wins the game.  This implies $m_{1,j} \geq 1/j$.
\end{itemize}
Combining the above bounds on $m_{1,j}$ gives the desired bound $m_{1,j} = 1/j$.  The lemma follows.
\end{proof}

\begin{lemma}  \label{lem:main-diagonal-of-M}
For all $i \geq 2$, $m_{i,i} = 1 + m_{i-1,i}$.
\end{lemma}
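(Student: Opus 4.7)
The plan is to prove the identity by establishing both directions $m_{i,i} \geq 1 + m_{i-1,i}$ and $m_{i,i} \leq 1 + m_{i-1,i}$ through a reduction of one turn, reusing Lemma~\ref{lem:countdown} (for losing states of the form $(i,i-1)$) and the inductive OBR $m_{i-1,i}$ as a black box.

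For the lower bound $m_{i,i} \geq 1 + m_{i-1,i}$, I would exhibit an explicit strategy for $P_2$ and show that $P_1$ must start with at least $1 + m_{i-1,i}$ to survive it. On the first turn, $P_2$ sets $v = 1$ and uses her omnipotent power to observe $P_1$'s bid $\beta_1$ before committing her own. If $\beta_1 < \$1$, then $P_2$ can outbid $P_1$ with some $\beta_2 \in (\beta_1, 1]$ and win the turn, driving the countdowns to $(i, i-1)$, which by the preceding lemma is unwinnable for $P_1$. Hence $P_1$ is forced to bid $\beta_1 \geq \$1$ to win (recall ties favor the dealer $P_1$). When $P_1$ wins this turn, the first-price rule leaves $P_2$'s budget intact at $\$1$, while $P_1$'s budget drops to $b_1 - \beta_1 \leq b_1 - 1$. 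The new state has countdowns $(i-1, i)$, which by definition requires budget ratio at least $m_{i-1,i}$, forcing $b_1 - 1 \geq m_{i-1,i}$.

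For the upper bound $m_{i,i} \leq 1 + m_{i-1,i}$, I would propose the following strategy for $P_1$ starting with $b_1 = 1 + m_{i-1,i}$: whenever $P_2$ chooses $v = 1$, bid exactly $\$1$; whenever $P_2$ chooses $v = 0$, bid $\$0$. If $v = 1$, the bid of $\$1$ is never outbid because $P_2$'s budget is only $\$1$ and $P_1$ wins ties as dealer. After $P_1$ wins this turn, her budget is exactly $m_{i-1,i}$ and $P_2$'s is $\$1$, so the ratio is exactly $m_{i-1,i}$ in countdown state $(i-1,i)$, and the inductive hypothesis finishes the game.

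The main obstacle is the $v = 0$ case in the upper bound, since a $v = 0$ turn does not naturally drive the game into a smaller countdown state of the form $(i-1,i)$. The plan here is to argue that $P_2$ can never prefer $v=0$ in state $(i,i)$: by the definition of countdowns in terms of the maximum combined score $T'$, choosing $v=0$ either keeps the state at $(i,i)$ (wasting a turn without changing the required ratio) or transitions it to $(i-1,i-1)$, for which Lemma~\ref{lem:diagonal-relationship} guarantees $m_{i-1,i-1} \leq m_{i,i}$. In either case $P_2$ cannot raise the required budget for $P_1$ by choosing $v=0$, so we may restrict attention to $v=1$, and the above strategy handles that case exactly. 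Combining the two directions yields $m_{i,i} = 1 + m_{i-1,i}$.
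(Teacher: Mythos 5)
Your proof is correct and follows essentially the same route as the paper's: $P_1$ must take the next value-$1$ turn by bidding $P_2$'s full budget of $\$1$ (ties favoring the dealer), leaving her the state $(i-1,i)$, while $v=0$ choices cannot help $P_2$ by Lemma~\ref{lem:diagonal-relationship}. You simply make explicit the two-sided argument and the $v=0$ case that the paper compresses into the single equation $m_{i,i}=1+\max_{2\le k\le i}m_{k-1,k}$.
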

\begin{proof}
When $i=j$, both players have the same countdown value.  By Lemma~\ref{lem:always-leading}, $P_1$ has to win the next bid
whose outcome is $1$, where she can do so only by bidding with the same budget as $P_2$ (i.e., bidding $\$1$).
The remaining budget of $P_1$ must be enough for her to win at the case where $P_1$ has countdown value $i-1$ and $P_2$
has countdown value $i$.   Thus,
\[
  m_{i,i} = 1 + \max_{2 \leq k \leq i}\ \{\, m_{k-1,k}\, \} = 1 + m_{i-1,i},
\]
where the first equality comes from the fact that after $P_1$ wins the next bid, the countdown values of $P_1$ and $P_2$, respectively,
will be $k-1$ and $k$ for some $k \in [2, i]$, while the second equality follows from Lemma~\ref{lem:diagonal-relationship}.
This completes the proof of the lemma.
\end{proof}

\begin{lemma} \label{lem:general-entry-of-M}
When $2 \leq i < j$, the following recurrence holds:
\[
  m_{i,j}\ =\ \frac{m_{i,j-1}\,(1+m_{i-1,j})}{1+m_{i,j-1}}.
\]
\end{lemma}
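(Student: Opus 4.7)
The plan is to decompose one turn of play at state $(i,j)$ into (a) $P_2$'s choice of value from $\{0,1\}$ and (b) the ensuing first-price sealed-bid auction, in which $P_2$ sees $P_1$'s bid before placing her own (the omnipotent-power interpretation). I would first argue that $P_2$ never gains by setting the value to $0$ when $i<j$: such a turn either leaves the countdown state unchanged (if the parity of $T'$ absorbs the lost unit) or shifts it to $(i-1,j-1)$, and by Lemma~\ref{lem:diagonal-relationship} we have $m_{i-1,j-1}\leq m_{i,j}$, so $P_1$ would only need weakly less budget after the shift. Hence we may assume the value at this turn is $1$, and the recurrence only has to capture this case.

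Next I would pin down $P_1$'s best bid in the resulting auction. Scale so that $b_2=1$ and $b_1=m_{i,j}$; suppose $P_1$ bids $\beta\geq 0$. Because of the dealer tiebreak rule, $P_2$ has only two sensible responses: bid $0$ and concede (the game moves to state $(i-1,j)$ with remaining budgets $(b_1-\beta,\,1)$), or bid just above $\beta$ and take the value (the game moves to $(i,j-1)$ with remaining budgets $(b_1,\,1-\beta)$ in the limit). For $P_1$ to survive either continuation, her surviving budget ratios must satisfy
\[
b_1-\beta \;\geq\; m_{i-1,j} \qquad\text{and}\qquad b_1 \;\geq\; m_{i,j-1}(1-\beta).
\]
$P_2$ will pick whichever inequality is tighter for $P_1$, and $P_1$ chooses $\beta$ to minimize the larger of the two right-hand sides, $\beta + m_{i-1,j}$ and $m_{i,j-1}(1-\beta)$.

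The first expression is strictly increasing in $\beta$ and the second strictly decreasing, so the minimax is attained where they are equal; solving that single linear equation yields
\[
\beta^{*} \;=\; \frac{m_{i,j-1}-m_{i-1,j}}{1+m_{i,j-1}} \qquad\text{and}\qquad b_1 \;=\; \beta^{*}+m_{i-1,j} \;=\; \frac{m_{i,j-1}(1+m_{i-1,j})}{1+m_{i,j-1}},
\]
which is exactly the recurrence in the statement. The part I expect to be the main obstacle is confirming that $\beta^{*}$ lies in the feasible interval $[0,\min(b_1,1)]$: nonnegativity reduces to the auxiliary monotonicity $m_{i-1,j}\leq m_{i,j-1}$, which should follow by an induction on $i+j$ using Lemmas~\ref{lem:diagonal-relationship}, \ref{lem:row-of-M}, and \ref{lem:main-diagonal-of-M}, and a careful limiting argument is also required to justify $P_2$'s ``bid $\beta+\varepsilon$'' response rigorously under the strict dealer tiebreak, so that the supremum required budget for $P_1$ is indeed attained by the balancing choice of $\beta$.
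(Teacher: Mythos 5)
Your proposal is correct and follows essentially the same route as the paper: the paper's proof is exactly the minimax balancing of the two continuation budgets $r+m_{i-1,j}$ (if $P_1$ wins the bid) and $(1-r)\,m_{i,j-1}$ (if $P_2$ outbids), equated to obtain $r^*$ and hence the recurrence. Your additional steps --- dismissing $P_2$'s value-$0$ option via Lemma~\ref{lem:diagonal-relationship}, checking that $\beta^*$ is feasible, and treating the $\varepsilon$-overbid limit under the dealer tiebreak --- are details the paper leaves implicit, not a different argument.
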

\begin{proof}
Consider the time when the countdown values of $P_1$ and $P_2$ are $i$ and $j$, respectively.
For $P_1$ to guarantee a win at this time, her minimum budget will be $m_{i,j}$, and such a budget
should allow her to post some optimal bid $r^*$ at this turn to her favour.  Precisely, we must have
\[
    r^*\ =\ \arg\min_r\, \max\, \{\, r + m_{i-1,j},  (1-r)\,m_{i, j-1}\, \},
\]
where the first term is the minimum budget that $P_1$ needs when $P_1$ gets the next value,
the second term is the minimum budget that $P_1$ needs when $P_2$ gets the next value, and
the $\max$ operator corresponds to $P_2$'s choice to maximize $P_1$'s minimum budget.
Then, from $P_1$'s point of view, the best $r$ must come from the case when the two terms are equal,
so that
\[
     r^*\ =\ \frac{m_{i,j-1} - m_{i-1,j}}{1+m_{i,j-1}}.
\]
Consequently, we have
\begin{eqnarray*}
  m_{i,j}\ &=&\ r^* + m_{i-1,j}\ =\ (1-r^*)\,m_{i,j-1}\\
              &\ &\\
              &=&\ \frac{m_{i,j-1}\,(1+m_{i-1,j})}{1+m_{i,j-1}}.
\end{eqnarray*}
\end{proof}

\subsubsection*{Getting OBR from the Matrix.}
The lemmas above allow us to fill in the matrix $M$, using $O(1)$ time per entry, as follows:
\begin{enumerate}
  \item  Fill in the first row of $M$ by Lemma~\ref{lem:row-of-M}.
  \item  For $i = 2$ to $\myceil{T/2}$
  \begin{enumerate}
            \item Fill in $m_{i,i}$ by Lemma~\ref{lem:main-diagonal-of-M}, based on $m_{i-1,i}$ that has been computed.
            \item For $j = i+1$ to $\myceil{T/2}$, fill in $m_{i,j}$ by Lemma~\ref{lem:general-entry-of-M}, based on $m_{i-1,j}$ and $m_{i,j-1}$
                    that have been computed.
  \end{enumerate}
\end{enumerate}
Immediately, we have that for a $T$-turn game, the OBR countdown matrix $M$ can be obtained in $O(T^2)$ time.
Indeed, the following theorem gives a closed-form for each entry of $M$, so that we can obtain any entry on the fly in $O(1)$ time.
\begin{theorem}
For any $i \leq j$, the closed form of $m_{i,j}$ in the OBR countdown matrix $M$ is:
\[
  m_{i,j}\ =\ \frac{i\, (j-i+3)}{(j-i+1)(j+2)}.
\]
\end{theorem}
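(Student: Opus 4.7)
The plan is to prove the closed form by induction, following exactly the order in which the dynamic program fills in $M$: first the top row ($i=1$), then for $i = 2, 3, \ldots$ the diagonal entry $m_{i,i}$, followed by the off-diagonal entries $m_{i,j}$ for $j = i+1, i+2, \ldots$. At each step, the newly computed entry depends only on entries already handled, so the induction is clean.

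For the base case $i = 1$, I would simply plug $i=1$ into the proposed formula and check it collapses to $1/j$, matching Lemma~\ref{lem:row-of-M}. For the main-diagonal case, assuming by induction that $m_{i-1,i} = \frac{(i-1)\cdot 4}{2\cdot(i+2)} = \frac{2(i-1)}{i+2}$, I would apply Lemma~\ref{lem:main-diagonal-of-M}:
\[
  m_{i,i} \;=\; 1 + \frac{2(i-1)}{i+2} \;=\; \frac{3i}{i+2},
\]
which coincides with the closed form evaluated at $j = i$, namely $\frac{i\cdot 3}{1\cdot(i+2)}$.

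The bulk of the work, and the main obstacle, is the off-diagonal inductive step, where one must verify that the closed form is consistent with the recurrence of Lemma~\ref{lem:general-entry-of-M}. To manage the algebra, I would introduce the shorthand $k = j - i \ge 1$, rewriting the target as $m_{i,j} = \frac{i(k+3)}{(k+1)(j+2)}$, $m_{i,j-1} = \frac{i(k+2)}{k(j+1)}$, and $m_{i-1,j} = \frac{(i-1)(k+4)}{(k+2)(j+2)}$. The verification then reduces to showing
\[
  \frac{i(k+3)}{(k+1)(j+2)} \;=\; \frac{m_{i,j-1}\bigl(1+m_{i-1,j}\bigr)}{1+m_{i,j-1}}.
\]
Clearing denominators on the right-hand side, the numerator of $1+m_{i-1,j}$ becomes $(k+2)(j+2)+(i-1)(k+4)$ and the numerator of $1+m_{i,j-1}$ becomes $k(j+1)+i(k+2)$. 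Cross-multiplying and expanding, both sides become polynomials in $i,j,k$ (with the constraint $j = i+k$), and the claim reduces to a polynomial identity that can be checked by routine collection of terms.

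I expect that the algebraic verification in the off-diagonal step will be the only delicate piece; the overall structure of the induction, and the base and diagonal cases, follow immediately from the preceding lemmas. Once this identity is established, the theorem follows: every entry with $i \le j$ is either a base case, a diagonal case, or covered by the recurrence, so the closed form holds throughout the upper triangle of $M$, and in particular any entry can then be evaluated in $O(1)$ time.
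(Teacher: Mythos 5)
Your proposal follows essentially the same route as the paper's proof: induction in the dynamic-program fill-in order, with the base row from Lemma~\ref{lem:row-of-M}, the diagonal from Lemma~\ref{lem:main-diagonal-of-M}, and the off-diagonal step from Lemma~\ref{lem:general-entry-of-M}; your substitution $k=j-i$ is just a notational convenience for the same algebra. The polynomial identity you reduce to, $(k+1)\bigl[(k+2)(j+2)+(i-1)(k+4)\bigr]=(k+3)\bigl[k(j+1)+i(k+2)\bigr]$ with $j=i+k$, does indeed check out, so the argument is complete once that routine expansion is written down.
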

\begin{proof}
Let $F(i,j)$ denote the above closed form.
We show that $m_{i,j} = F(i,j)$ by induction on the row number $i$ (and with increasing order in the column number $j$).
First, the basis case is true, since $F(1,j) = (1 \times (j+2)) / (j \times (j+2))  = 1/j = m_{1,j}$.
Next, assume inductively that all entries in the $k$th row satisfy the closed form.
Then, when $i = k+1$ and $j = i = k+1$, we have:
\begin{eqnarray*}
    m_{k+1,k+1}\ &=&\  1 + m_{k,k+1}\ \,=\ 1 + F(k,k+1) \\
                        \ &=&\  1 + \frac{2k}{k+3}\ \ \ =\ \frac{3k+3}{k+3} \\
                        \ &=&\  F(k+1,k+1).
\end{eqnarray*}
As for $i = k+1$ and $j > i $, we have:
\begin{eqnarray*}
    &&m_{k+1,j}\\
    &=& \frac{m_{k+1,j-1}\,(1+m_{k,j})}{1+m_{k+1,j-1}}\ =\ \frac{F(k+1,j-1)\, (1+F(k,j))}{1+F(k+1,j-1)} \\
                      &\ & \\
                      &=&  \frac{(k+1)(j - k + 1)}{(j - k - 1)(j+1)} \times \\
                        &\ & \hspace{8ex} \left( 1 + \frac{k\, (j-k+3)}{(j-k+1)(j+2)} \right) \div \\
                        &\ & \hspace{16ex} \left( 1 + \frac{(k+1)(j-k+1)}{(j-k-1)(j+1)} \right) \\
                      &\ & \\
                      &=& \frac{ (k+1)[(j-k+1)(j+2) + k\,(j-k+3)]}{(j+2)[(j-k-1)(j+1) + (k+1)(j-k+1)]} \\
                      &\ & \\
                      &=& \frac{ (k+1)[(j-k+2)(j+2) - (j-2) + k\,(j-k+2) + k]}{(j+2)[(j-k)(j+1) - (j+1) + (k+1)(j-k)+ (k+1)]} \\
                      &\ & \\
                      &=& \frac{ (k+1)(j-k+2)[(j+2) + k -1]}{(j+2)(j-k)[(j+1) + (k+1) - 1]} \\
                      &\ & \\
                      &=& \frac{(k+1)(j-k+2)}{(j+2)(j-k)} \\
                      &\ &\\
                      &=&  F(k+1,j).
\end{eqnarray*}
Thus, all entries in the $(k+1)$th row satisfy the closed form, so that the inductive case is true.
The theorem thus follows.
\end{proof}

\newcommand{\obr}{\mathit{obr}}

\noindent
This gives the following corollary.
\begin{corollary}
Let $\obr(T)$ denote the optimal budget ratio for a $T$-turn game with set $\{ 0, 1\}$.  Then, we have:
\begin{enumerate}
\setlength{\itemsep}{3pt}
  \item $\obr(T) = m_{\myceil{T/2},\myceil{T/2}} = (3\,\myceil{T/2})/ (\myceil{T/2} + 2 ).$
  \item $\obr(T)$ is increasing.
  \item $\lim_{T \rightarrow \infty} \obr(T) = 3.$
\end{enumerate}
\end{corollary}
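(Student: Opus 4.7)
The plan is to derive all three items directly from the closed form established in the preceding theorem, treating the corollary as essentially a specialization and simple asymptotic analysis.

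For item (1), I would substitute $i = j = \lceil T/2 \rceil$ into the closed form $m_{i,j} = i(j-i+3)/((j-i+1)(j+2))$. Since $j - i = 0$, the numerator becomes $3\lceil T/2 \rceil$ and the denominator becomes $\lceil T/2\rceil + 2$, yielding the stated expression. Item (1) also uses the definition of the OBR countdown matrix: when the game has $T$ turns and no value has been acquired yet, both countdown values equal $\lceil T/2\rceil$, so the OBR of the $T$-turn game is exactly the $(\lceil T/2\rceil, \lceil T/2\rceil)$ entry.

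For item (2), I would rewrite $3i/(i+2)$ as $3 - 6/(i+2)$, which makes monotonicity in $i$ transparent: increasing $i$ strictly decreases $6/(i+2)$, so $3i/(i+2)$ is strictly increasing in $i$. Since $\lceil T/2 \rceil$ is non-decreasing in $T$ (and strictly increases whenever $T$ advances from an even value to the next odd value), $\obr(T)$ is correspondingly non-decreasing in $T$; I would note explicitly that $\obr(2k-1) = \obr(2k)$ while $\obr(2k) < \obr(2k+1)$, so the word "increasing" here means in the weak/monotone sense (matching the introduction's phrasing).

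For item (3), I would again use $\obr(T) = 3 - 6/(\lceil T/2\rceil + 2)$. Since $\lceil T/2\rceil \to \infty$ as $T \to \infty$, the subtracted term tends to $0$, giving the limit $3$. This is immediate from the rewriting in item (2), so no further work is needed.

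There is no real obstacle in this corollary: the hard work is done by the closed-form theorem. The only small subtlety worth flagging is the meaning of "increasing" given that $\lceil T/2\rceil$ is constant across pairs of consecutive $T$'s; I would be explicit about this to avoid confusion. Everything else is direct substitution and an elementary limit.
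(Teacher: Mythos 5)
Your proposal is correct and matches the paper's (implicit) argument exactly: the corollary is a direct specialization of the closed-form theorem at $i = j = \lceil T/2\rceil$, followed by the elementary rewriting $3i/(i+2) = 3 - 6/(i+2)$ for monotonicity and the limit. Your explicit remark that ``increasing'' must be read in the weak sense (since $\obr(2k-1)=\obr(2k)$) is a worthwhile clarification the paper omits.
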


\subsection{Analysis for All-Pay Auctions}
In the discussion so far, we have implicitly assumed that the loser of a bid does not need to pay.  However, in practical settings such as our
presidential election example, if a player bids for $r$ units at some turn but loses the bid, it is often that some fraction $\alpha \in [0,1]$,
called \emph{all-pay ratio}, of the bid (i.e., $\alpha\,r$ units) has to be deducted from the budget for some overhead cost.

\medskip

\noindent
To compute the optimal budget ratio in this case, we can apply a similar analysis as before, where we define an OBR countdown matrix $N = [n_{i,j}]$,
such that $n_{i,j}$ denotes the OBR when the countdown values of $P_1$ and $P_2$ are $i$ and $j$, respectively.    Note that with such a charging scheme,
$P_2$ may use her omnipotent power to learn $P_1$'s bid $r$ beforehand, so that $P_2$ either pays $0$ when she does not want to win the bid, or pays $r+\epsilon$
for an arbitrary small $\epsilon > 0$ when she wants to win the bid.  Then, it is easy to argue that {\tt (i)} $n_{i,j} = +\infty$ for $i > j$, and (ii) the following
recurrences hold:

\begin{enumerate}
  \item $n_{0,j} = 0$ for all $j \geq 1$.
  \item $n_{i,i} = 1 + n_{i-1,i}$ for $i \geq 1$.
  \item $n_{i,j} = n_{i-1,j} + r^* =  n_{i,j-1}\,(1-r^*) + \alpha\,r^*$  for all $1 \leq i \leq j$, where $r^*$ denotes the optimal bid $P_1$
           should set at a turn when the countdown values of $P_1$ and $P_2$ are $i$ and $j$, respectively.
           The middle term corresponds to the case where $P_1$ wins the bid, so that $P_1$ still needs a budget of $n_{i-1,j}$ to win.
           The last term corresponds to the case where $P_2$ wins the bid, so that $P_1$ first needs to pay $\alpha\, r^*$ for the overhead cost,
           and needs a budget of $n_{i,j-1}\,(1-r^*)$ to win.
\end{enumerate}
Evaluating the third recurrence in the above, we get:
\[
    n_{i,j}\ =\ \frac{n_{i,j-1}n_{i-1,j} + n_{i,j-1} - \alpha\,n_{i-1,j}}{n_{i,j-1} + (1-\alpha)}.
\]
Based on the above recurrences, all entries $n_{i,j}$ with $i \leq j$ can be filled in a total of $O(T^2)$ time;  after that,
we can obtain the desired OBR as $n_{\myceil{T/2},\myceil{T/2}}$.  Moreover, we may derive closed form as before.
\begin{theorem}
For the standard all-pay auction, we have $\alpha = 1$, and the corresponding closed form, for $1\leq i \leq j$, is:
\[
   n_{i,j}\  =\ 1 + \frac{(i-1)(j-i+3)}{(j-i+1)(j+1)},
\]
indicating that the OBR in this case is increasing and has a limit of $4$.
\end{theorem}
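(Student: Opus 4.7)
The plan is to establish the closed form $G(i,j) := 1 + (i-1)(j-i+3)/((j-i+1)(j+1))$ by induction on $i$ with an inner induction on $j$, in parallel to the proof of the previous theorem. A convenient simplification is to work with the shifted quantity $p_{i,j} := n_{i,j} - 1$. Under this substitution, the $\alpha = 1$ specialization of recurrence~(3), namely $n_{i,j} = n_{i-1,j} + 1 - n_{i-1,j}/n_{i,j-1}$, rewrites after a short manipulation as
\[
  p_{i,j} = \frac{p_{i, j-1}\,(1 + p_{i-1, j})}{1 + p_{i, j-1}},
\]
which is exactly the recurrence from Lemma~\ref{lem:general-entry-of-M}. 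The target formula itself factors as $G(i,j) - 1 = F(i-1, j-1)$, where $F$ is the closed form of the previous theorem, so up to an index shift the off-diagonal portion of the argument reduces to that theorem with essentially no new algebra.

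For the base row $i = 1$, I would combine $n_{0, j} = 0$ with recurrences (1)--(3) to deduce $n_{1, j} = 1$ for all $j \geq 1$ (equivalently $p_{1, j} = 0$), matching $G(1, j) = 1$. For the inductive step at a given $i \geq 2$, I would first handle the diagonal entry by recurrence~(2): $n_{i, i} = 1 + n_{i-1, i}$ together with the outer inductive hypothesis yields $n_{i, i} = 1 + G(i-1, i)$, which simplifies to $G(i, i)$ in one line. Then I would induct on $j$ from $j = i+1$ upward, using the $p$-recurrence above and the inductive hypotheses on both $(i-1, j)$ and $(i, j-1)$; by the identification $p = $ shifted $m$, this step is precisely Lemma~\ref{lem:general-entry-of-M} applied at indices $(i-1, j-1)$.

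The main obstacle is making the boundary bookkeeping tidy: the shift $n_{i, j} = 1 + m_{i-1, j-1}$ must be anchored on the top row and on the diagonal entries before the previous theorem can be quoted for the off-diagonal entries. If one prefers a self-contained argument without invoking the previous theorem, the alternative is to substitute $G(i-1, j)$ and $G(i, j-1)$ directly into the $n$-recurrence and simplify to $G(i,j)$, which is a mechanical but noticeably lengthier rational-function calculation in the spirit of the proof of the previous theorem.

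Finally, for the monotonicity and limit, the OBR for a $T$-turn game equals $n_{\lceil T/2 \rceil, \lceil T/2 \rceil} = G(\lceil T/2 \rceil, \lceil T/2 \rceil) = 1 + 3(\lceil T/2 \rceil - 1)/(\lceil T/2 \rceil + 1) = 4 - 6/(\lceil T/2 \rceil + 1)$, which is strictly increasing in $\lceil T/2 \rceil$ and approaches $4$ as $T \to \infty$.
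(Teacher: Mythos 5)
Your proposal is correct and follows essentially the same route as the paper: the paper also reduces the $\alpha=1$ case to the first-price matrix via the identity $n_{i,j} = 1 + m_{i-1,j-1}$ (your $p_{i,j} = n_{i,j}-1$ is the same shift), anchors it on the row $n_{1,j}=1$ and the diagonal recurrence, and then quotes the generalised first-price recurrences, with the same closing computation $4 - 6/(\lceil T/2\rceil + 1)$ for monotonicity and the limit.
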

\begin{proof}
\label{sec:proof-of-all-pay-closed-form}
We show that when $\alpha = 1$, the entry $n_{i,j}$ in the OBR countdown matrix in all-pay auction, for $1 \leq i \leq j$,
is equal to:
\[
    G(i,j) = 1 + \frac{(i-1)(j-i+3)}{(j-i+1)(j+1)},
\]
where $n_{i,j}$ satisfies the following recurrences:
\begin{enumerate}
  \item $n_{0,j} = 0$ for all $j \geq 1$, and $n_{i,j} = +\infty$ when $i > j$.
  \item $n_{i,i} = 1 + n_{i-1,i}$ for $i \geq 1$.
  \item $n_{i,j} = 1 + (n_{i-1,j}\,(n_{i,j-1} - 1)/n_{i,j-1})$ for all $1 \leq i < j$,
\end{enumerate}

\medskip
\noindent
To simplify the proof, we first observe the interesting fact that
\begin{enumerate}
  \item $G(i,j) = 1$ for $1 = i \leq j$.
  \item $G(i,j) = 1 + m_{i-1,j-1}$ for $2 \leq i \leq j$, where $m_{i,j}$ is the entry in the OBR countdown matrix $M$ for first-price auction.
\end{enumerate}
Furthermore, if we assume that $m_{0,j} = 0$, the recurrence about $m_{i,i}$ in Lemma~\ref{lem:main-diagonal-of-M} is generalized for $i \geq 1$,
and the recurrence about $m_{i,j}$ in Lemma~\ref{lem:general-entry-of-M} is generalized for $1 \leq i < j$, then, the above two facts can be unified as:
\[
    G(i,j) = 1 + m_{i-1,j-1} \mbox{\ \ \ \ for $1 \leq i \leq j$.}
\]
Thus, to show the correctness of the closed form, it is equivalent to proving that $n_{i,j} = 1 + m_{i-1,j-1}$ for all $1 \leq i \leq j$.

\medskip

\noindent
Our proof goes by induction on the row number $i$ (and with increasing order in the column number $j$).
First, the basis case corresponds to $i = 1$, in which we obtain $n_{1,1} = 1$
and $n_{1,j} = 1 + 0 = 1$, so that $n_{1,j} = 1 + m_{0,j-1}$ for all $j \geq i$.
Next, assume inductively that all entries in the $k$th row satisfy the closed form.

\medskip

\noindent
Then, when $i = k+1$ and $j = i = k+1$, we have:
\begin{eqnarray*}
    n_{k+1,k+1}\ &=&\  1 + n_{k,k+1}  \ \ \,=\ 1 + 1 + m_{k-1,k}  \\
                          \ &=&\  1 + m_{k,k},
\end{eqnarray*}
where the last equality follows from (the generalised) Lemma~\ref{lem:main-diagonal-of-M}.

\medskip

\noindent
As for $i = k+1$ and $j > i $, we have:
\begin{eqnarray*}
    n_{k+1,j} \ &=&\  1 + \frac{n_{k,j}\,(n_{k+1,j-1} - 1)}{n_{k+1,j-1}} \\
                      \ &\ &\ \\
                      \ &=&\  1 + \frac{(1+m_{k-1,j-1})\,m_{k,j-2}}{1+m_{k,j-2}} \\
                      \ &\ &\ \\
                      \ &=&\ 1 + m_{k,j-1},
\end{eqnarray*}
where the last equality follows from (the generalised) Lemma~\ref{lem:general-entry-of-M}.

\medskip

\noindent
Thus, all entries in the $(k+1)$th row satisfy the closed form, so that the inductive case is true.
This completes the proof of the induction, and consequently the closed form is correct.
\end{proof}

\subsection{Further Discussion} \label{sec:dis}
Suppose that the initial budget ratio $b_1/b_2$ is at least the OBR.
Then, $P_1$ can guarantee to win the game by bidding with the following strategy:
\begin{quotation}
\noindent
Initially, $P_1$ keeps track of $P_2$'s current budget $B$ as $b_2$, and sets the countdown values $i$ and $j$
of both players to be $\myceil{T/2}$.  At each turn, if the outcome is $1$,
$P_1$ uses countdown values $i$ and $j$ (of $P_1$ and $P_2$, respectively) to compute $r^* = m_{i,j} - m_{i-1,j}$, and bids with $r^* \times B$,
and aslo updates $i$, $j$, and $B$ depending who wins the bid;\footnote{%
In case $P_2$ wins the bid, but her bidding price is not disclosed, then $P_1$ simply assumes pessimistically that
$P_2$ wins the bid with $r^* + \epsilon$, and updates $B$ as $B - r^*$.}
else, if the outcome is $0$, $P_1$ updates $i$ and $j$ if needed.
\end{quotation}
The above strategy has the advantages that {\tt (i)} the optimal bid for each turn can be decided in $O(1)$ time, and {\tt (ii)} the updates for each turn take $O(1)$ time.
Morevoer, if $P_2$ does not have omnipotent power, then
$P_1$ may win the game with less budget, since $P_2$ may be overbidding (as she does not know $P_1$'s bid),
or cannot force a worst-case scenario (as she does not control the outcome).

\bigskip

\noindent
\subsubsection*{Bounded Falling Behind.} In most real-life scenarios, the initial budget ratio would be less than the OBR, so that there is no guaranteed winning.
However, $P_1$ may still want some guarantee of her performance, say, the difference $S_2 - S_1$ between the score of $P_2$ and the score of $P_1$
is at most $k$.   In such a \emph{handicapped} game, the OBR for a $T$-turn game becomes
$m_{\myceil{(T-k)/2},\myceil{(T+k)/2}}$, since if $P_1$ gets a score of $\myceil{(T-k)/2}$ beforehand, the score difference $S_2 - S_1$ is at most $k$,
while on the other hand, if $P_2$ gets a score of $\myceil{(T+k)/2}$ beforehand, $P_2$ can then use her omnipotent power to make the remaining outcomes all $0$,
so that $S_2 - S_1$ is at least $k+1$.

\newcommand{\comment}[1]{}

\section{Finding the Optimal Budget Ratio: Fixed Value}

In some contests such as sports, the winner of each component battle accumulates her partial score by a fixed value.
We model this scenario by fixing the value of each battle to 1.
Player $P_2$ becomes less powerful, as she can no longer control the common value for bidding of each turn.
In this section, we study the OBR for player $P_1$ under this model.

\subsection{Analysis for First-Price Auctions}
As before, we define an OBR countdown matrix $L = [\ell_{i,j}]$, such that
$\ell_{i,j}$ denotes the optimal budget ratio where countdown value of $P_1$ and $P_2$ are $i$ and $j$, respectively.
The following are some lemmas concerning the values of $\ell_{i,j}$ in different scenarios.

\begin{lemma} \label{lem:row-of-L}
For all $j$, $\ell_{1,j} = 1/j$.
\end{lemma}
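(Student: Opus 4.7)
The plan is to essentially mirror the structure of Lemma~\ref{lem:row-of-M} (the analogous statement for the value set case), establishing matching upper and lower bounds on $\ell_{1,j}$. The setting differs only in that $P_2$ can no longer choose outcomes from $\{0,1\}$ — every turn is worth exactly $1$ — so I expect the proof to actually be a bit cleaner than Lemma~\ref{lem:row-of-M}, since there is no longer any ``set the value to $0$'' option for $P_2$ to exploit.

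For the upper bound $\ell_{1,j} \leq 1/j$, I would exhibit an explicit strategy for $P_1$ assuming she has a budget of $1/j$ (and $P_2$ has $1$). The strategy is trivially: at every remaining turn, bid exactly $1/j$. Since $P_1$ is the dealer, $P_2$ must bid strictly more than $1/j$ to win any given turn, and so $P_2$ can outbid $P_1$ at most $j-1$ times before her budget of $1$ is exhausted (a standard budget-pigeonhole argument: any $j$ winning bids of $P_2$ would sum to strictly more than $j \cdot (1/j) = 1$). By Lemma~\ref{lem:always-leading} and the remaining-turn count implied by the countdown values being $1$ and $j$, there are enough turns left for $P_1$ to eventually win one of them, at which point her countdown reaches $0$ and by Lemma~\ref{lem:countdown} she wins the game.

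For the matching lower bound $\ell_{1,j} \geq 1/j$, I would argue by contradiction: suppose $P_1$ starts with budget $b_1 < 1/j$. Then $P_2$, using her ability to see $P_1$'s bid at each turn before submitting her own, bids $r + \epsilon$ whenever $P_1$ bids $r$, for a sufficiently small fixed $\epsilon > 0$. Choosing $\epsilon$ small enough that $jb_1 + j\epsilon < 1$ (possible because $jb_1 < 1$) guarantees $P_2$ can afford $j$ consecutive winning bids. After $j$ such turns $P_2$'s countdown reaches $0$, and by Lemma~\ref{lem:countdown} she wins, contradicting the assumption that $b_1 < 1/j$ was enough for $P_1$.

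I don't anticipate any real obstacle here — both bounds are straightforward accounting arguments once the dealer rule and the countdown interpretation are in place. The only thing I want to be careful about is that Lemma~\ref{lem:countdown}'s proof in the $\{0,1\}$ case invoked $P_2$'s omnipotent power to zero out remaining outcomes; in the fixed-value model this is unavailable, but reaching countdown $0$ still corresponds to partial score $\geq \lceil T/2 \rceil$, which combined with the total score being exactly $T$ over the game still forces a win for the player who reaches it first (ties breaking in $P_1$'s favor). Having verified that Lemma~\ref{lem:countdown} carries over unchanged, the two-sided argument closes and $\ell_{1,j} = 1/j$ follows exactly as in Lemma~\ref{lem:row-of-M}.
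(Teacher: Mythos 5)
Your proposal is correct and follows essentially the same two-sided argument as the paper: the upper bound via $P_1$ bidding her whole budget $1/j$ every turn so that $P_2$ can outbid her at most $j-1$ times, and the lower bound via $P_2$ winning $j$ consecutive turns against a budget below $1/j$ (the paper has $P_2$ bid a fixed $1/j$ each turn rather than an adaptive $r+\epsilon$, but this is immaterial). Your closing remark about Lemma~\ref{lem:countdown} carrying over to the fixed-value model matches the paper's implicit use of it, so no further comparison is needed.
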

\begin{proof}
Consider the time when the countdown values of $P_1$ and $P_2$ are $1$ and $j$, respectively.
Then, we have the following observations.

\begin{itemize}
\item
First, suppose that $P_1$ has a budget $b_1 =\$1/j$ and $P_2$ has a budget $b_2 = \$1$ .
Since the countdown value of $P_1$ is $1$, the game ends as soon as $P_1$ wins a turn.
Now, $P_1$ bids all her budget at each subsequent turn (which has outcome $1$).
To avoid $P_1$ from winning immediately, $P_2$ has to use her budget to win each turn, thus paying at least $1/j$ for each turn;
yet, $P_2$ can win at most $j-1$ turns, after which her budget drops below $1/j$, and $P_1$ would win the next turn.
Thus, $P_1$ prevails, which implies $\ell_{1,j} \leq 1/j$.

\item
Conversely, suppose that $P_1$ has a budget less than $\$1/j$.  Then, $P_2$ may use her omnipotent power to make all outcomes of
the next $j$ turns be $1$, where she bids $\$1/j$ for each turn.  Thus, $P_2$ will get $j$ extra values before $P_1$ gets $1$.
By Lemma~\ref{lem:countdown}, $P_2$ wins the game.  This implies $\ell_{1,j} \geq 1/j$.
\end{itemize}
Combining the above bounds on $\ell_{1,j}$ gives the desired bound $\ell_{1,j} = 1/j$.  The lemma follows.
\end{proof}

\begin{lemma} \label{lem:column-of-L}
For all $i$, $\ell_{i,1} = i$.
\end{lemma}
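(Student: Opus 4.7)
The plan is to establish the equality $\ell_{i,1}=i$ by proving the two matching inequalities, mirroring the style of Lemma~\ref{lem:row-of-L}. The key intuition is that because the value of every turn is fixed to $1$ (so $P_2$ has no power to ``skip'' turns by setting the outcome to $0$) and because losers in a first-price auction do not pay, $P_2$'s budget stays at $\$1$ until she wins a single turn, at which moment her countdown hits $0$ and she wins the whole game by Lemma~\ref{lem:countdown}. This threat forces $P_1$ to match or exceed $P_2$'s bid at \emph{every} turn, not just at the turns $P_1$ actually wins.

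For the upper bound $\ell_{i,1}\le i$, I would exhibit an explicit strategy: with $b_1=i$, player $P_1$ simply bids exactly $\$1$ at each turn. Since $b_2=1$, $P_2$'s bid can be at most $\$1$, and because $P_1$ is the dealer the tie goes to $P_1$. Hence $P_1$ wins every turn, spends exactly $\$1$ per win, and reaches countdown $0$ after $i$ turns; by Lemma~\ref{lem:countdown} she wins the game. Her cumulative spend is $i$, which is exactly her budget, so the strategy is feasible.

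For the lower bound $\ell_{i,1}\ge i$, I would argue by adversary: suppose $b_1<i$ and let $P_2$ play the constant-$\$1$ strategy at every turn. Since $P_2$ never pays unless she wins (in which case the game ends in her favor), she can sustain this bid indefinitely. If $P_1$ ever bids strictly less than $\$1$ at any turn, $P_2$ wins that turn, her countdown drops to $0$, and she wins the contest. Therefore $P_1$ must bid at least $\$1$ at every one of the first $i$ turns in which she intends to accrue points, and each such win debits at least $\$1$ from her budget. After $i$ wins she has spent at least $\$i$, contradicting $b_1<i$; thus $\ell_{i,1}\ge i$. Combining both directions gives $\ell_{i,1}=i$.

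The main subtlety I anticipate is handling the tie-breaking carefully at the boundary: $P_1$'s bid of exactly $\$1$ must beat (or tie) $P_2$'s bid of $\$1$, and this relies on the dealer convention from Section~2.1. A secondary delicate point is articulating why $P_1$ cannot benefit from ``skipping'' a turn (bidding $0$) to save budget: the fixed-value assumption means every turn genuinely counts toward $P_2$'s countdown, so any skipped turn immediately loses the game. Once these two points are spelled out, both inequalities follow directly from the budget arithmetic, with no recursion or induction required.
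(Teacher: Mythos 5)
Your proof is correct and takes essentially the same approach as the paper: the paper's (much terser) argument is exactly that $P_1$ must match $P_2$'s full budget of $\$1$ at each turn to prevent $P_2$ from winning her single needed turn, paying $\$1$ per win for $i$ wins. Your version simply makes the two inequalities and the dealer tie-breaking explicit.
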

\begin{proof}
Since $P_2$ only needs to win one turn to win the whole game, to avoid this from happening,
$P_1$ must bid the same amount as $P_2$'s budget in each subsequent turn, for $i$ turns,
until she wins. This implies that $\ell_{i,1} = i$.
\end{proof}

\begin{lemma} \label{lem:general-entry-of-L}
When $2 \leq i , j$, the following recurrence holds:
\[
  \ell_{i,j}\ =\ \frac{\ell_{i,j-1}\,(1+\ell_{i-1,j})}{1+\ell_{i,j-1}}.
\]
\end{lemma}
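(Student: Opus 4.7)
The plan is to mirror the min-max argument from Lemma~\ref{lem:general-entry-of-M}, adapted to the fixed-value setting. Normalize so that $P_2$'s current budget is $1$ and $P_1$'s current budget is $\ell_{i,j}$. The next turn has value $1$ (no zero-value option is available to $P_2$ now), and $P_1$ commits some bid $r$. Since this is a first-price auction in which losers pay nothing, $P_2$ — exploiting her omnipotent power to observe $r$ first — has two options: concede at cost $0$, in which case the countdowns become $(i-1, j)$; or win by bidding $r + \epsilon$, in which case the countdowns become $(i, j-1)$ and her remaining budget drops to $1-r$. For $P_1$ to still be on track, she needs $\ell_{i,j} - r \geq \ell_{i-1,j}$ in the former case, and $\ell_{i,j} \geq (1-r)\,\ell_{i,j-1}$ in the latter. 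Since $P_2$ picks the option that is worst for $P_1$, we arrive at
\[
\ell_{i,j}\ =\ \min_{r}\,\max\bigl\{\,r + \ell_{i-1,j},\ (1-r)\,\ell_{i,j-1}\,\bigr\}.
\]

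The two terms inside the max are linear in $r$ with opposite monotonicity, so the min-max is attained at the balance point where the terms are equal. Solving gives
\[
r^*\ =\ \frac{\ell_{i,j-1} - \ell_{i-1,j}}{1 + \ell_{i,j-1}},
\]
and substituting back into either expression yields the claimed recurrence. The algebra is identical to that in the proof of Lemma~\ref{lem:general-entry-of-M}; I will simply rename symbols and skip the routine manipulation.

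The subtle point — and the place this lemma departs from its value-set counterpart — is that Lemma~\ref{lem:always-leading} is no longer available: because $P_2$ cannot zero out future outcomes, $P_1$ is not forced to be leading at all times, and the recurrence must be valid for all $i, j \geq 2$, including the cases $i = j$ and $i > j$. Correspondingly, unlike in the value-set case, there is no separate diagonal treatment analogous to Lemma~\ref{lem:main-diagonal-of-M}. The main obstacle I expect is verifying that $r^*$ is a feasible bid, i.e.\ $0 \leq r^* \leq \ell_{i,j}$ (and $r^* < 1$, so that $P_2$ can indeed outbid). The nontrivial part is $r^* \geq 0$, which amounts to a monotonicity claim $\ell_{i,j-1} \geq \ell_{i-1,j}$ — intuitively, trading one unit of $P_2$'s countdown for one unit of $P_1$'s countdown should only help $P_1$. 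I would establish this in parallel with the recurrence by induction on $i + j$, using Lemmas~\ref{lem:row-of-L} and~\ref{lem:column-of-L} as base cases and checking that the recurrence preserves the inequality from one diagonal to the next.
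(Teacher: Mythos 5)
Your proposal is correct and follows essentially the same route as the paper, which proves this lemma by the same min-max balancing argument used for Lemma~\ref{lem:general-entry-of-M}: set $r^*$ where $r + \ell_{i-1,j} = (1-r)\,\ell_{i,j-1}$ and substitute back. Your added remarks --- that the recurrence covers the diagonal since Lemma~\ref{lem:always-leading} no longer applies, and that feasibility $0 \leq r^* < 1$ reduces to $\ell_{i,j-1} \geq \ell_{i-1,j}$ --- are sound refinements of details the paper leaves implicit.
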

\begin{proof}
The lemma follows from an analogous proof as that of Lemma~\ref{lem:general-entry-of-M}.
\end{proof}
\comment{
\begin{proof}
Consider the time when the countdown values of $P_1$ and $P_2$ are $i$ and $j$, respectively.
For $P_1$ to guarantee a winning, her minimum budget will be $\ell_{i,j}$, and such a budget
should allow her to post some optimal bid $r^*$ at this turn to her favour.  Precisely, we must have
\[
    r^*\ =\ \arg\min_r\, \max\, \{\, r + \ell_{i-1,j},  (1-r)\,\ell_{i, j-1}\, \},
\]
where the first term is the minimum budget that $P_1$ needs when $P_1$ gets the next value,
the second term is the minimum budget that $P_1$ needs when $P_2$ gets the next value;
the $\max$ operator corresponds to $P_2$'s choice to maximize $P_1$'s minimum budget.
Then, from $P_1$'s point of view, the best $r$ must come from the case when the two terms are equal,
so that
\[
     r^*\ =\ \frac{\ell_{i,j-1} - \ell_{i-1,j}}{1+\ell_{i,j-1}}.
\]
Consequently, we have
\[
  \ell_{i,j}\ =\ r^* + \ell_{i-1,j}\ =\ (1-r^*)\,\ell_{i,j-1}\ =\ \frac{\ell_{i,j-1}\,(1+\ell_{i-1,j})}{1+\ell_{i,j-1}}.
\]
\end{proof}
}

\subsubsection*{Getting OBR from the Matrix.}
The lemmas from the previous subsection allow us to fill in the matrix $L$, using $O(1)$ time per entry, as follows:
\begin{enumerate}
  \item  Fill in the first row of $L$ by Lemma~\ref{lem:row-of-L}.
  \item  Fill in the first column of $L$ by Lemma~\ref{lem:column-of-L}.
  \item  For $i = 2$ to $\lceil T/2 \rceil$
  \begin{enumerate}
            \item For $j = 2$ to $\lceil T/2 \rceil$, fill in $\ell_{i,j}$ by Lemma~\ref{lem:general-entry-of-L}, based on $\ell_{i-1,j}$ and $\ell_{i,j-1}$
                    that have been computed.
  \end{enumerate}
\end{enumerate}
Immediately, we have that for a $T$-turn game, the OBR countdown matrix $L$ can be obtained in $O(T^2)$ time.
Indeed, the following theorem gives a closed-form for each entry of $L$, so that we can obtain any entry on the fly in $O(1)$ time.
\begin{theorem}
For any $i, j$, the closed form of $\ell_{i,j}$ in the OBR countdown matrix $L$ is:
\[
  \ell_{i,j}\ =\ i/j,
\]
indicating that the OBR in this case is constantly 1.
\end{theorem}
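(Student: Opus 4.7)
The plan is to prove the closed form $\ell_{i,j} = i/j$ by a direct induction, using only Lemmas~\ref{lem:row-of-L}, \ref{lem:column-of-L}, and \ref{lem:general-entry-of-L}. In contrast with the proof of the closed form for $M$, which required tedious polynomial manipulations to collapse, the recurrence for $L$ has the convenient symmetry that the factors $(j{-}1)$, $j$ will cancel cleanly across the numerator and denominator, so the verification is essentially a one-line calculation per step.

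First, I would dispose of the base cases. Lemma~\ref{lem:row-of-L} gives $\ell_{1,j} = 1/j$, which matches the claimed formula $i/j$ when $i=1$; Lemma~\ref{lem:column-of-L} gives $\ell_{i,1} = i$, which matches the claimed formula when $j=1$. Thus the closed form is correct along the first row and the first column.

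For the inductive step, I would order the entries by $(i+j)$ (or equivalently run an outer induction on $i\ge 2$ with an inner induction on $j\ge 2$), so that at the time we verify $\ell_{i,j}$ both $\ell_{i-1,j}$ and $\ell_{i,j-1}$ are already known to satisfy the closed form. Substituting $\ell_{i-1,j} = (i-1)/j$ and $\ell_{i,j-1} = i/(j-1)$ into the recurrence of Lemma~\ref{lem:general-entry-of-L} yields
\[
  \ell_{i,j}\ =\ \frac{\dfrac{i}{j-1}\left(1 + \dfrac{i-1}{j}\right)}{1 + \dfrac{i}{j-1}}
             \ =\ \frac{\dfrac{i}{j-1}\cdot\dfrac{j+i-1}{j}}{\dfrac{j+i-1}{j-1}}
             \ =\ \frac{i}{j},
\]
where in the last step the factors $(j+i-1)$ and $(j-1)$ cancel between numerator and denominator. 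This completes the induction.

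There is no real obstacle here; the only thing to check is that when $i=j=2$ the inductive hypothesis calls on $\ell_{1,2}$ and $\ell_{2,1}$, both of which are covered by the base cases, so the induction is well-founded. Once the closed form is established, the stated consequence that the OBR is constantly $1$ follows by specializing $i=j=\lceil T/2\rceil$, giving $m_{\lceil T/2\rceil,\lceil T/2\rceil}^{L} = \lceil T/2\rceil/\lceil T/2\rceil = 1$ for every $T$.
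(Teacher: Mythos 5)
Your proof is correct and takes essentially the same approach as the paper: an induction anchored on the first row and first column (Lemmas~\ref{lem:row-of-L} and \ref{lem:column-of-L}), with the inductive step substituting $\ell_{i-1,j}=(i-1)/j$ and $\ell_{i,j-1}=i/(j-1)$ into the recurrence of Lemma~\ref{lem:general-entry-of-L} and cancelling the common factor $i+j-1$. The only difference is cosmetic (you order the induction by $i+j$ rather than row-by-row), and your concluding specialization to $i=j=\lceil T/2\rceil$ matches the paper's claim that the OBR is constantly $1$.
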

\begin{proof}
Let $H(i,j)$ denote the above closed form.
We show that $\ell_{i,j} = H(i,j)$ by induction on the row number $i$ (and with increasing order in the column number $j$).
First, the basis case of first row is true, since $H(1,j) = 1/j = \ell_{1,j}$.
Also, the basis case of first column is also true, since $H(i,1) = i/1 = i = \ell_{i,1}$.
Next, assume inductively that entries $\ell_{i-1,j}$ and $\ell_{i,j-1}$ satisfy the closed form.
Then, considering entry $l_{i,j}$, we have:
\begin{eqnarray*}
    \ell_{i,j}\ &=&\  \frac{i/(j-1)\cdot(1+(i-1)/j)}{1+i/(j-1)}\ \\
                  &\ & \\
    					\ &=&\ \frac{i\cdot (i+j-1)/j}{i+j-1} \\
                  &\ & \\
                        \ &=&\  i/j \ = H(i,j).
\end{eqnarray*}
Thus, all entries satisfy the closed form, so that the inductive case is true.
The theorem thus follows.
\end{proof}

\subsection{Analysis for All-Pay Auctions}
The analysis for the case with all-pay auction, with $\alpha = 1$, is similar.
We define an OBR countdown matrix $Q = [q_{i,j}]$, such that
$q_{i,j}$ denotes the optimal budget ratio where countdown value of $P_1$ and $P_2$ are $i$ and $j$ in this scenario, respectively.
We have the following three lemmas.

\begin{lemma} \label{lem:row-of-L-allpay}
For all $j$, $q_{1,j} = 1$.
\end{lemma}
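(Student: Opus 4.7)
The plan is to mirror the structure of Lemma~\ref{lem:row-of-L}, but replacing the first-price logic with the symmetric pay rule of all-pay auctions. I would split the argument into two matching bounds.

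For the upper bound $q_{1,j}\le 1$, I would set $b_1 = b_2 = 1$ and have $P_1$ bid her entire budget on the first turn. Since $P_2$ has budget at most $1$, she cannot strictly outbid $P_1$; any bid of at most $1$ loses the tie to $P_1$ (the dealer), so $P_1$ wins the turn, her countdown drops to $0$, and by Lemma~\ref{lem:countdown} she wins the game.

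For the lower bound $q_{1,j}\ge 1$, I would suppose $b_1 < 1$ and exhibit an omnipotent strategy for $P_2$: fix a small $\epsilon>0$ in advance, observe $P_1$'s bid $r$ each turn, and respond with $r+\epsilon$. The key structural observation is that under fixed value, every turn is decisive, so from countdown state $(1,j)$ the subgame terminates in at most $j$ turns (either $P_1$ wins once and reaches countdown $0$, or $P_2$ wins all $j$). Hence $P_2$'s total expenditure when she outbids throughout is at most $\sum_i r_i + j\epsilon \le b_1 + j\epsilon$. Choosing $\epsilon := (1-b_1)/(2j)$ keeps this strictly below $1$, so the outbidding strategy is feasible; $P_2$ then wins every remaining turn and reduces her own countdown to $0$, winning the game.

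The only delicate point is the interaction between the dealer-favoring tie-break and the $\epsilon$-perturbation: $P_2$ must beat $P_1$ \emph{strictly}, not tie. Because $P_1$'s shortfall $1-b_1$ is a fixed positive constant and the subgame length is a priori bounded by $j$, a single uniform $\epsilon$ works throughout, so no subtlety arises. I do not anticipate a serious obstacle; the proof is essentially a symmetric-cost mirror of Lemma~\ref{lem:row-of-L}, and it makes clear why the bound here jumps from $1/j$ to $1$: in all-pay every overbid costs $P_2$ as much as $P_1$, eliminating the arithmetic discount that let $P_1$ survive with only a $1/j$-fraction of $b_2$ in the first-price case.
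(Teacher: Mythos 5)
Your proof is correct and follows essentially the same two-bound argument as the paper: $P_1$ going all-in on the first turn (winning the tie as dealer) for the upper bound, and $P_2$ responding $r+\epsilon$ to each bid $r$ for the lower bound. The only difference is that you make the feasibility of the $\epsilon$-overbidding explicit by bounding the subgame length by $j$ and fixing $\epsilon=(1-b_1)/(2j)$, a detail the paper leaves implicit with ``arbitrarily small $\epsilon$.''
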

\begin{proof}
Consider the time when the countdown values of $P_1$ and $P_2$ are $1$ and $j$, respectively. Then, we have the following observations.

\begin{itemize}
\item
First, suppose that $P_1$ has the same budget as $P_2$.
Then, $P_1$ can ensure the winning by bidding all her budget at the first turn which implies $q_{1,j} \leq 1$.

\item
Conversely, suppose that $P_1$'s budget is less than $P_2$'s.  Then, each time $P_1$ bids $r$, $P_2$ can bid $r+\epsilon$, for an arbitrarily small $\epsilon > 0$, to win the turn.
In this way, $P_2$ will win all the subsequent turns, thus getting $j$ extra values before $P_1$ gets $1$.
By Lemma~\ref{lem:countdown}, $P_2$ wins the game.  This implies $q_{1,j} \geq 1$.
\end{itemize}
Combining the above bounds on $q_{1,j}$ gives the desired bound $q_{1,j} = 1$.  The lemma follows.
\end{proof}

\begin{lemma} \label{lem:column-of-L-allpay}
For all $i$, $q_{i,1} = i$.
\end{lemma}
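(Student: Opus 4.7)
The plan is to establish $q_{i,1}=i$ by matching upper and lower bounds, mirroring the structure of Lemma~\ref{lem:column-of-L} but carefully tracking the all-pay cost borne by $P_1$.

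For the upper bound $q_{i,1}\le i$, observe that since $P_2$'s countdown is $1$, any single turn $P_2$ wins ends the game in $P_2$'s favour by Lemma~\ref{lem:countdown}. Hence $P_1$ must win every one of the next $i$ turns. I would exhibit the simple strategy in which $P_1$ bids exactly $\$1$, the whole of $P_2$'s normalised budget, at each of these $i$ turns. Because $P_1$ is the dealer, ties are resolved in her favour, so $P_2$ cannot actually take the turn no matter what she bids; and because the auction is all-pay with $\alpha=1$, any positive bid $P_2$ submits is pure loss. Hence $P_2$'s best response is to bid $0$ and retain her full budget. Over the $i$ turns $P_1$ pays exactly $\$1$ each time (all as winning bids), for a total expenditure of $i$; so a budget of $i$ suffices.

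For the lower bound $q_{i,1}\ge i$, suppose for contradiction that $P_1$'s budget is strictly less than $i$. She still must win each of the next $i$ turns, submitting bids $r_1,r_2,\ldots,r_i$ with
\[
\sum_{t=1}^{i} r_t \;\le\; b_1 \;<\; i,
\]
so some $r_t<1$. I would argue that $P_2$'s optimal counterstrategy is to stay passive, bidding $0$ on every turn (thus keeping her full $\$1$ budget intact), until she first sees such a bid $r_t<1$; at that moment she bids $r_t+\epsilon$ for arbitrarily small $\epsilon>0$, which is affordable and outbids $P_1$. $P_2$ thereby takes that turn, and since her countdown was $1$ the game ends in her favour by Lemma~\ref{lem:countdown}. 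This yields $q_{i,1}\ge i$, and combined with the upper bound gives $q_{i,1}=i$.

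The only delicate point, and the sole place where the all-pay mechanics genuinely enter, is the lower-bound step: one has to be sure that $P_2$'s waiting strategy really preserves her full budget of $\$1$ all the way up to the moment of the first sub-unit bid. This is immediate from the all-pay rules with $\alpha=1$: while $P_1$ is bidding $\$1$ she cannot be overbid, so any nonzero bid by $P_2$ is wasted; $P_2$ therefore prefers $0$, and her ammunition remains intact until the opportunity the lower bound exploits. Everything else is routine and parallels Lemma~\ref{lem:column-of-L}, so I would not anticipate any further obstacle.
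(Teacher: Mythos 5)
Your proof is correct and takes essentially the same approach as the paper's, whose one-line argument ($P_1$ must match $P_2$'s full budget at each of $i$ consecutive turns) you simply expand into explicit upper and lower bounds. The extra detail you supply --- that $P_2$'s best response to a \$1 bid is to bid $0$ and thereby keep her budget intact for all $i$ turns --- is exactly the point the paper leaves implicit, so there is no substantive difference.
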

\begin{proof}
Since $P_2$ only needs one more turn to win the game, to avoid this from happening,
$P_1$ must bid the same amount as $P_2$'s budget in each subsequent turn, for $i$ turns, to win the game.
This implies that $q_{i,1} = i$.
\end{proof}

\begin{lemma} \label{lem:general-entry-of-L-allpay}
When $2 \leq i , j$, the following recurrence holds:
\[
  q_{i,j}\ =\ 1\ -\ \frac{q_{i-1,j}}{q_{i,j-1}}\ +\ q_{i-1,j}.
\]
\end{lemma}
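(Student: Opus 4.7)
The plan is to replay the one-step min-max balance argument used for Lemma~\ref{lem:general-entry-of-M} (and adapted to all-pay auctions in the proof of Theorem~2), now specialized to the fixed-value, all-pay, $\alpha=1$ setting. Fix countdown values $(i,j)$ with $i,j\geq 2$, normalize $P_2$'s budget to $1$, and give $P_1$ her minimal budget $q_{i,j}$. Since the turn's value is fixed at $1$ and cannot be zeroed by $P_2$, her omnipotent power here reduces only to choosing her bidding response after seeing $P_1$'s bid $r$; and because every bid is paid in full ($\alpha=1$), $P_2$'s two sensible pure responses are to bid $0$ (concede at no cost) or to bid $r+\epsilon$ (win at cost $\approx r$).

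First I would write down the two budget-feasibility conditions that $P_1$ must satisfy so that she can still force a win from the resulting subgame. If $P_2$ concedes, the state becomes $(i-1,j)$ with budgets $(q_{i,j}-r,\,1)$, so the required inequality is $q_{i,j}-r\geq q_{i-1,j}$. If $P_2$ overbids, the state becomes $(i,j-1)$ with budgets $(q_{i,j}-r,\,1-r)$, so the required inequality is $(q_{i,j}-r)/(1-r)\geq q_{i,j-1}$. $P_2$ picks the branch that is worse for $P_1$, and $P_1$ picks $r$ to minimize the maximum of the two required budgets, so the optimum bid $r^{*}$ occurs at the crossing where both inequalities become equalities.

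Next I would solve that system. The first equation gives $r^{*}=q_{i,j}-q_{i-1,j}$; substituting into the second yields
\[
q_{i-1,j}\;=\;q_{i,j-1}\bigl(1-q_{i,j}+q_{i-1,j}\bigr),
\]
and dividing by $q_{i,j-1}$ and rearranging delivers exactly $q_{i,j}=1-q_{i-1,j}/q_{i,j-1}+q_{i-1,j}$, as claimed.

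The one subtlety I expect to need a short aside on is justifying that the ``both tight'' crossing really is the min-max, rather than an interior point where one branch is slack. The required-budget function in the concede branch, $q_{i-1,j}+r$, is strictly increasing in $r$, while the required-budget function in the overbid branch, $q_{i,j-1}(1-r)+r=q_{i,j-1}-(q_{i,j-1}-1)r$, is non-increasing precisely when $q_{i,j-1}\geq 1$. That monotonicity is easy to verify by induction from the base cases in Lemma~\ref{lem:row-of-L-allpay} ($q_{1,j}=1$) and Lemma~\ref{lem:column-of-L-allpay} ($q_{i,1}=i\geq 1$), using the recurrence itself: if $q_{i-1,j}\geq 1$ and $q_{i,j-1}\geq 1$, then $q_{i,j}=1+q_{i-1,j}(1-1/q_{i,j-1})\geq 1$. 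Once this monotonicity is in hand, the two curves cross transversally, $r^{*}\in[0,1]$ is feasible, and the recurrence derivation above is complete. The main obstacle is really just this monotonicity bookkeeping; the algebra itself is a two-line substitution.
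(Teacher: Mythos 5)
Your proposal is correct and follows essentially the same route as the paper: set up the min-max over $P_2$'s two responses (concede versus overbid by $\epsilon$), note that with $\alpha=1$ the two required budgets are $r+q_{i-1,j}$ and $r+(1-r)q_{i,j-1}$, and balance them to get $r^{*}=1-q_{i-1,j}/q_{i,j-1}$ and hence the recurrence. Your added monotonicity check (that $q_{i,j-1}\geq 1$ so the overbid branch is non-increasing in $r$, making the crossing the true min-max) is a justification the paper simply asserts, so it is a welcome refinement rather than a different approach.
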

\begin{proof}
Consider the time when the countdown values of $P_1$ and $P_2$ are $i$ and $j$, respectively.
For $P_1$ to guarantee a winning, her minimum budget will be $q_{i,j}$, and such a budget
should allow her to post some optimal bid $r^*$ at this turn to her favour.  Precisely, we must have
\[
    r^*\ =\ \arg\min_r\, \max\, \{\, r + q_{i-1,j},\  r +(1-r)\,q_{i, j-1}\, \},
\]
where the first term is the minimum budget that $P_1$ needs when $P_1$ gets the next value,
the second term is the minimum budget that $P_1$ needs when $P_2$ gets the next value;
the $\max$ operator corresponds to $P_2$'s choice to maximize $P_1$'s minimum budget.
Then, from $P_1$'s point of view, the best $r$ must come from the case when the two terms are equal,
so that
\[
     r^*\ =\ 1 - \frac{q_{i-1,j}}{q_{i,j-1}}.
\]
Consequently, we have
\begin{eqnarray*}
  q_{i,j}&=& r^*\ +\ q_{i-1,j}\ =\ 1\ -\ \frac{q_{i-1,j}}{q_{i,j-1}}\ +\ q_{i-1,j}.
\end{eqnarray*}
\end{proof}

The above lemmas allow us to fill in the matrix $Q$ in the same way as we fill in $L$ as follows:
\begin{enumerate}
  \item  Fill in the first row of $Q$ by Lemma~\ref{lem:row-of-L-allpay}.
  \item  Fill in the first column of $Q$ by Lemma~\ref{lem:column-of-L-allpay}.
  \item  For $i = 2$ to $\lceil T/2 \rceil$
  \begin{enumerate}
            \item For $j = 2$ to $\lceil T/2 \rceil$, fill in $q_{i,j}$ by Lemma~\ref{lem:general-entry-of-L-allpay}, based on $q_{i-1,j}$ and $q_{i,j-1}$
                    that have been computed.
  \end{enumerate}
\end{enumerate}
Immediately, we have that for a $T$-turn game, the OBR countdown matrix $Q$ can be obtained in $O(T^2)$ time.
Indeed, the following theorem gives a closed-form for each entry of $Q$, so that we can obtain any entry on the fly in $O(1)$ time.
\begin{theorem}
For any $i, j$, the closed form of $q_{i,j}$ in the OBR countdown matrix $Q$ is:
\[
  q_{i,j}\ =\ \frac{i+j-1}{j},
\]
indicating that the OBR in this case is increasing and has a limit of 2.
\end{theorem}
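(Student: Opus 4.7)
The plan is to mirror the inductive template that was used for the matrices $M$, $N$, and $L$. Let $H(i,j) = (i+j-1)/j$. I will prove that $q_{i,j} = H(i,j)$ for all $1 \le i, j$ by induction on $i+j$ (equivalently, by induction on rows with increasing column order). The base cases are immediate: Lemma~\ref{lem:row-of-L-allpay} gives $q_{1,j} = 1 = j/j = H(1,j)$, and Lemma~\ref{lem:column-of-L-allpay} gives $q_{i,1} = i = H(i,1)$. This handles the entire first row and first column.

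For the inductive step, fix $i, j \ge 2$ and assume $q_{i-1,j} = H(i-1,j) = (i+j-2)/j$ and $q_{i,j-1} = H(i,j-1) = (i+j-2)/(j-1)$. Plug into the recurrence from Lemma~\ref{lem:general-entry-of-L-allpay}:
\[
   q_{i,j}\ =\ 1 - \frac{q_{i-1,j}}{q_{i,j-1}} + q_{i-1,j}.
\]
The ratio telescopes nicely because the numerators $i+j-2$ cancel, yielding $q_{i-1,j}/q_{i,j-1} = (j-1)/j$. Substituting gives
\[
   q_{i,j}\ =\ 1 - \frac{j-1}{j} + \frac{i+j-2}{j}\ =\ \frac{1 + (i+j-2)}{j}\ =\ \frac{i+j-1}{j}\ =\ H(i,j),
\]
closing the induction. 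I expect no real obstacle here; the only thing to be careful about is that the cancellation in the ratio relies on $i+j-2 \neq 0$, which holds throughout the inductive region $i, j \ge 2$.

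Finally, for the monotonicity and limit claim, note that the OBR for a $T$-turn game is the diagonal entry
\[
   q_{\lceil T/2 \rceil,\, \lceil T/2 \rceil}\ =\ \frac{2\,\lceil T/2 \rceil - 1}{\lceil T/2 \rceil}\ =\ 2 - \frac{1}{\lceil T/2 \rceil}.
\]
This expression is strictly increasing in $\lceil T/2 \rceil$ and converges to $2$ as $T \to \infty$, which establishes both that the OBR is increasing in $T$ and that its limit equals $2$, matching the statement.
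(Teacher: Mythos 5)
Your proof is correct and follows essentially the same route as the paper's: the same base cases from Lemmas~\ref{lem:row-of-L-allpay} and~\ref{lem:column-of-L-allpay}, the same substitution of the inductive hypothesis into the recurrence of Lemma~\ref{lem:general-entry-of-L-allpay}, and the same cancellation of the $i+j-2$ factors. Your added remarks on why the cancellation is legitimate and the explicit computation of the diagonal entry $2 - 1/\lceil T/2\rceil$ for the monotonicity and limit claims are small but welcome refinements over the paper's version.
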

\begin{proof}
Let $J(i,j)$ denote the above closed form.
We show that $q_{i,j} = J(i,j)$ by induction.
First, the basis case of first row is true, since $J(1,j) = (1+j-1)/j = 1 = q_{1,j}$.
Also, the basis case of first column is also true, since $J(i,1) = (i+1-1)/1 = i = q_{i,1}$.
Next, assume inductively that entries $q_{i-1,j}$ and $q_{i,j-1}$ satisfy the closed form.
Then, considering entry $q_{i,j}$, we have:
\begin{eqnarray*}
    q_{i,j}\ &=&\  1\ -\ \left(\frac{i+j-2}{j}\right) \div \left(\frac{i+j-2}{j-1}\right)\ + \\
               &&\\
               && \hspace{8ex} \left(\frac{i+j-2}{j}\right)\ \\
               &&\\
    					\ &=&\ 1\ -\ \frac{j-1}{j}\ +\ \frac{i+j-2}{j} \ \\
			&&\\		
                        \ &=&\  \frac{i+j-1}{j}\ =\ J(i,j).
\end{eqnarray*}
Thus, all entries satisfy the closed form, so that the inductive case is true.
The theorem follows.
\end{proof}

\begin{remark}
A discussion similar to Section~\ref{sec:dis} for the corresponding bidding strategies and bounded falling behind in the case of a fixed value can be analogously conducted here.
\end{remark} 
\section{Conclusions and Future Work}
We model and study budget-constrained multi-battle contests as extensive form zero-sum games.
When each turn is a first-price or all-pay auction with fixed value 1 or value set $\{0,1\}$,
we focus on analyzing the 2-player optimal budget ratio that guarantees player~$P_1$'s winning (or losing a bounded number of valued turns)
against an omnipotent player~$P_2$.
We give efficient dynamic programs to find the optimal budget ratio and the corresponding series of bidding strategies.

In our current results, we mainly ensure player~$P_1$'s winning when her budget is not less than the optimal budget ratio;
when allowing player~$P_1$ to lose at most a bounded number of valued battles,
we in some sense relax her budget requirement to be less since the optimal ratio is smaller now.
Yet, this is not the only way to look at cases when the budget ratio is less than the optimal one.
Another direction is to ask about bidding strategies that lead to more situations (i.e., a series of values chosen from the value set)
ending up in winning since now a player with a moderate budget may not always win.
Inspired by the budget ratio analysis here,
there might exist a dynamic program to find such bidding strategies that maximize the winning situations over all possible situations.
Other immediate things to consider include lessening player~$P_2$'s power by making the number of valued turns known to player~$P_1$.
Generally, considering multiple players in budget-constrained multi-battle contests is challenging,
and may take other approaches to analyze. 

\bibliographystyle{abbrv}

\begin{thebibliography}{10}

\bibitem{benoit:krishna}
J.-P. Beno\^{i}t and V.~Krishna.
\newblock Multiple-object auctions with budget constrained bidders.
\newblock {\em Rev. Econ. Stud.}, pages 155--179, 2001.

\bibitem{bernhardt:scoones}
D.~Bernhardt and D.~Scoones.
\newblock A note on sequential auctions.
\newblock {\em American Economic Review}, 84, 1994.

\bibitem{fatima:sequential:budget}
S.~S. Fatima, M.~Wooldridge, and N.~R. Jennings.
\newblock Sequential auctions for common value objects with budget constrained
  bidders.
\newblock {\em Multiagent Grid Syst.}, 6(5,6), 2010.

\bibitem{harris:vickers:perfect}
C.~Harris and J.~Vickers.
\newblock Perfect equilibrium in a model of a race.
\newblock {\em Review of Economic Studies}, 52, 1985.

\bibitem{harris:vickers:racing}
C.~Harris and J.~Vickers.
\newblock Racing with uncertinty.
\newblock {\em Review of Economic Studies}, 54, 1987.

\bibitem{klumpp:polborn}
T.~Klumpp and M.~K. Polborn.
\newblock Primaries and the new hampshire effet.
\newblock {\em Journal of Publlic Economics}, 90, 2006.

\bibitem{konrad}
K.~A. Konrad.
\newblock {\em Strategy and Dynamics in Contests}.
\newblock Oxford University Press, 2009.

\bibitem{konrad:kovenock}
K.~A. Konrad and D.~Kovenock.
\newblock Multi-battle contests.
\newblock {\em Games and Economic Behavior}, 66, 2009.

\bibitem{kvasov}
D.~Kvasov.
\newblock Contests with limited resources.
\newblock {\em Journal of Economic Theory}, 136, 2007.

\bibitem{pitchik:schotter}
C.~Pitchik and A.~Schotter.
\newblock Perfect equilibria in budget constrained sequential auctions: An
  experimental study.
\newblock {\em RAND Journal of Economics}, 19, 1988.

\bibitem{szymanski}
S.~Szymanski.
\newblock The economic design of sporting contests.
\newblock {\em Journal of Economic Literature}, 41, 2003.

\end{thebibliography}

\end{document}